\let\mathbb=\mathds %
\let\mathbb=\mathds %
\newtheorem{remark}{\bfseries Remark}
\newtheorem{theorem}{\bfseries Theorem}
\newtheorem{lemma}{\bfseries Lemma}
\newtheorem{proof}{\bfseries Proof}
\newtheorem{assumption}{\bfseries Assumption}
\newenvironment{list5}{
  \begin{list}{$\bullet$}{
      \setlength{\itemsep}{0.05cm}
      \setlength{\labelsep}{0.2cm}
      \setlength{\labelwidth}{0.3cm}
      \setlength{\parsep}{0in}
      \setlength{\parskip}{0in}
      \setlength{\topsep}{0in}
      \setlength{\partopsep}{0in}
      \setlength{\leftmargin}{0.17in}}}
      {\end{list}}
\newcommand{\CPU}{\texttt{CPU}\xspace}
\begin{document}
\title{CPU Scheduling in Data Centers Using Asynchronous Finite-Time Distributed Coordination Mechanisms}

\author{Andreas Grammenos, Themistoklis Charalambous,~\IEEEmembership{Senior Member,~IEEE}, and Evangelia Kalyvianaki

\IEEEcompsocitemizethanks{
    \IEEEcompsocthanksitem A. Grammenos is with the Department of Computer Science and Technology, University of Cambridge, Cambridge, and the Alan Turing Institute, London, UK. Email: {\tt ag926@cl.cam.ac.uk}.
    \IEEEcompsocthanksitem T. Charalambous is with the Department Electrical and Computer Engineering, School of Engineering, University of Cyprus. He is also a Visiting Professor with the Department of Electrical Engineering and Automation, School of Electrical Engineering, Aalto University, Espoo, Finland. Email: {\tt themistoklis.charalambous@aalto.fi}.
    \IEEEcompsocthanksitem E. Kalyvianaki is with the Department of Computer Science and Technology, University of Cambridge, Cambridge, UK. Email: {\tt ek264@cl.cam.ac.uk}.
}%
}

\markboth{IEEE Transaction on Network Science and Engineering}%
{Grammenos \MakeLowercase{\textit{et al.}}: Distributed CPU Scheduling in Data Centers Using Asynchronous Finite-Time Distributed Coordination Mechanisms}

\IEEEtitleabstractindextext{
\begin{abstract}
We propose an asynchronous iterative scheme that allows a set of interconnected nodes to distributively reach an agreement within a pre-specified bound in a finite number of steps. 
While this scheme could be adopted in a wide variety of applications, we discuss it within the context of task scheduling for data centers. 
In this context, the algorithm is guaranteed to \emph{approximately} converge to the optimal scheduling plan, given the available resources, in a finite number of steps.
Furthermore, by being asynchronous, the proposed scheme is able to take into account the uncertainty that can be introduced from straggler nodes or communication issues in the form of latency variability while still converging to the target objective.
In addition, by using extensive empirical evaluation through simulations we show that the proposed method exhibits state-of-the-art performance.
\end{abstract}
\begin{IEEEkeywords}
CPU, scheduling, optimization, distributed coordination, ratio consensus, finite-time termination.
\end{IEEEkeywords}
}
\maketitle

\IEEEraisesectionheading{\section{Introduction}\label{sec:intro}}

\IEEEPARstart{C}{loud computing} provides software and hardware resources on demand via the Internet and has become the predominant model for application deployment. 
The backbone of modern Cloud infrastructure consists of a network of data centers, each equipped with thousands of server machines, running diverse application workloads, supporting uncoordinated and heterogeneous users and their applications~\cite{barroso_datacenter_2018}. 
Data center resource management is the fundamental task of allocating resources (e.g., \CPU, memory, network bandwidth, and disk space) to workloads such that their performance objectives are satisfied and the overall data center utilization is kept high~\cite{cortez_resource_2017}. 
Notably, even slight deviations from the desired objectives can have substantial detrimental effects with millions of dollars in revenue potentially lost~\cite{amvrosiadis_diversity_2018}. 
Therefore, scheduling in data centers is the most fundamental operation responsible for allocating resources to workloads while satisfying their performance requirements~\cite{boutin_apollo_2014}.
In doing so, scheduling aims to find the best placement of jobs within the available compute nodes that maximizes the overall utilization of resources and which can ultimately lead to a massive reduction in operational and capital costs.

More formally, scheduling can be viewed as an \emph{optimization problem} in which workloads are allocated to server machines such that a performance goal is optimized while all constraints are satisfied~\cite{moritz_ray_2018,verma_large-scale_2015}. 
In this paper, we focus on \emph{minimizing the sum of \CPU utilization across servers}. 
In other words, the workload should be shared {proportionally} across servers {based on their hardware}, such that they all use the minimum percentage of their capacity and essentially the total workload at each server node is balanced {and proportional to its available resources}. 
The main reason for this formulation is to avoid overloading specific servers and so to efficiently serve workloads.
Solving a scheduling optimization problem in such a large-scale system {in a centralized fashion} is challenging due to the size of the network and the dynamic nature of resource requirements of incoming and existing workloads. 
{
In general, centralized approaches in multi-component systems require the collection of measurements or other information to a central location (at possibly high communication and computational cost), the computation of quantities of interest at this central location, and then the dissemination of these quantities to (a subset of) the components. 
This approach, as it is the case in Clouds, is often inefficient. 
This is because centralised approaches focus the entire load towards a single node.
This can not only be a point of failure, but also create congestion in the network causing often causing delays and spikes in response times~\cite{boutin_apollo_2014,hindman_mesos_2011,ousterhout_sparrow_2013}.
Cooperative distributed coordination algorithms have therefore received tremendous attention, especially during the last two decades. 
Several diverse research communities (e.g., biology, physics, control, communication and computer science) have made several contributions that have resulted in many recent advances in so called consensus-based approaches (see, for example, \cite{olfati-saber_consensus_2007}) and in distributed computation of functions of geographically dispersed data, also known as in-network computation (see, for example, \cite{giridhar_toward_2006} and references therein).} %

{
Classical approaches in distributed coordination algorithms typically assume timely and reliable exchange of information between neighboring components of a given multi-component system. 
These assumptions are not necessarily valid in practical settings due to varying delays that might affect transmissions at different times, as well as possible changes in the underlying interconnection topology (e.g., due to unexpected cluster changes as nodes randomly fail and/or abnormal runtime behaviors due to software or configuration faults and resource contention)~\cite{barroso_datacenter_2018, misra_managing_2019}.
}

{In this work, we propose a distributed coordination protocol to overcome these limitations.}
To this end, we posit a novel scheme that takes in account these potential latency variations in the form of explicit delays in the communication links during planning, while still remaining asynchronous in its operation and we guarantee that it will converge in finite-time.

\subsection{Contributions}
For the context of this work, we formulate the \CPU scheduling as a distributed optimization problem and solve it using distributed coordination mechanisms. 
More concretely, the contributions of the paper are as follows.
\begin{list5}
    \item {
        First, using existing theory from optimization, we provide the closed form solution, which requires the knowledge of global parameters, such as, the total capacity of the network and the total incoming workload.
    }
    \item {
        Second, it is shown that the problem can be solved in a distributed fashion.
    }
    \item {
        When the updates of the nodes are synchronous, we adopt a mechanism which uses a well-known consensus algorithm (namely, ratio consensus) proposed in~\cite{cady_finite-time_2015}, with which an \emph{approximate} solution is reached in a finite number of steps.
    }
    \item {
        When the updates of the nodes are asynchronous, we adopt a mechanism, {of} similar {flavor} to the one proposed in~\cite{prakash_distributed_2020}, in which finite-time average consensus is achieved in the presence of bounded time-varying delays. 
        More specifically, our proposed algorithm allows the nodes to distributively compute the optimal value to within an error bound in a finite number of steps. 
        The methodology builds upon \emph{(i)} robustified ratio consensus~\cite{hadjicostis_asynchronous_2011,hadjicostis_average_2014}, a distributed iterative algorithm in which each node maintains two state variables where the ratio of the states converges asymptotically to a constant that is equal for all the nodes, and \emph{(ii)} asynchronous $\max-$consensus algorithm~\cite{giannini_convergence_2013}.
        
    }
    \item {
        Finally, numerical examples and evaluations show the efficacy of the proposed solutions.
    }
\end{list5}

The main benefit of our approach is that the global optimization problem is decomposed into local objectives and the problem is then solved in a distributed manner via our proposed distributed coordination mechanisms, which provide a way for the nodes to terminate iterations simultaneously, while ensuring at the same time that the worst-case error lies within the pre-specified bound. 
These properties make these mechanisms suitable for applications in which (repeated) optimization problems have to be solved fast and in a finite number of steps. 
{Moreover, contrary to methods such as ADMM our scheme requires significantly less resources for its computation to reach similar objectives as can be seen from the results put forth in recent studies~\cite{chang_asynchronous_2016-1, jiang_asynchronous_2021}.
This property can be particularly useful as most scheduling operations assume minimal processing latency to reach a solution for the optimal placement of tasks. }
{To the best of our knowledge, this is the first algorithm with finite-time termination guarantees that can handle delays and provide asynchronous consensus.}

\subsection{Related Work}

\subsubsection{Data Center Scheduling}

{Centralized data center schedulers such as~\cite{isard_quincy_2009, gog_firmament_2016, mao_learning_2019, tumanov_tetrisched_2016, grandl_multi-resource_2014,2021:WangPSO}, as well as the well known centralized  approaches of Google's Borg and Kubernetes~\cite{burns_borg_2016} and Microsoft's Resource Central~\cite{cortez_resource_2017}, are able to provide optimized scheduling decisions under specific constraints and goals. 
More recently, there have been some centralized schedulers that tackle utilisation optimization focusing on energy efficiency~\cite{yuan_biobjective_2020, yuan_fine-grained_2020, yuan_geography-aware_2020}.
However, they require continuous transferring of resource information at the centralized scheduler which increases data center network traffic. 
Furthermore, centralized schedulers typically lack of large-scale scalability and they can be a single point of failure.}
In contrast, our distributed approach requires each node to send its estimated utilization to its out-neighbors only reducing therefore the total amount of information sent and uses the most up-to-date resource estimates for more accurate scheduling.

Popular decentralized schedulers such as~\cite{hindman_mesos_2011, ousterhout_sparrow_2013, schwarzkopf_omega_2013, boutin_apollo_2014,vavilapalli_apache_2013}, as well as the most recent primary autoscaler that Google uses on
its internal cloud \cite{rzadca_autopilot_2020}, aim to tackle data center scalability by allowing different scheduling decisions to occur in parallel by multiple schedulers. 
Such approaches span a wide spectrum of schedulers' coordination---from schedulers operating independently from each other (e.g.,~\cite{ousterhout_sparrow_2013}) to schedulers sharing some global resource information (e.g.,~\cite{schwarzkopf_omega_2013,boutin_apollo_2014})---and they also differ in the way they detect and resolve conflicts in the allocation of shared resources. 
We remark that, while these solutions exhibit good empirical performance, lack formal guarantees and largely work by using heuristics~\cite{gan_open-source_2019}.
Unfortunately, this can be problematic when volatile or unpredictable workloads are encountered~\cite{ousterhout_sparrow_2013}. 
Additionally, state sharing can be problematic in case of delays as such schedulers attempt to globally infer the state of the cluster and normally are not able to tolerate delays.
This in turn can lead to suboptimal performance~\cite{jeon_analysis_2019}.
In contrast in our distributed approach all nodes/schedulers coordinate \textit{asynchronously} by design to find optimal allocations at scheduling time without facing any conflicts.

Multi-resource allocation of tasks to data center nodes is known to be a APX-Hard~\cite{mao_learning_2019}. 
Most scheduling approaches employ heuristics to solve the problem in reasonable timeframes~\cite{mao_learning_2019, grandl_graphene_2016, tumanov_tetrisched_2016, boutin_apollo_2014}. 
Fewer approaches tackle the problem using appropriate centralized solvers (e.g., IBM's CPLEX in~\cite{tumanov_tetrisched_2016}) albeit for small problem sizes compared to today's data center sizes of thousands of nodes. 
Such approaches highly depend on the compute and memory capacity of the centralized solver to handle hundreds of thousands of constraints typically present in such problem formulations. 

Our approach is to formulate the problem of CPU task scheduling in data centers as a distributed optimization one to solve it using distributed coordination mechanisms. 
An approximate (not accurate) solution can be computed in a finite number of steps and is guaranteed to complete while exhibiting graceful scaling. 
These properties enable its application to data center sized scheduling problems containing even tens of thousands of participating nodes.

\subsubsection{Distributed finite-time average consensus}

{%
A distributed system or network consists of a set of components (nodes) that can share information via connection links (edges), forming a directed interconnection topology (directed graph). 
In general, the objective of a consensus problem is to have all agents agree upon a certain ({\em a priori} unknown) quantity of interest that is typically a function of some values that the nodes initially posses. 
When the agents (asymptotically) reach an agreement to the same value, we say that the distributed system (asymptotically) reaches consensus. 
Such problems include network coordination problems involving self-organization, formation of patterns, parallel processing, and distributed optimization. 
The problem of convergence of discrete-time consensus algorithms was initially targeted by Tsitsikis \emph{et al.} \cite{tsitsiklis_distributed_1986} and subsequently by many other researchers (see, for example, \cite{olfati-saber_consensus_2004, ren_consensus_2005, moreau_stability_2005, angeli_stability_2006, bliman_average_2008, cai_average_2011}). %
Convergence of consensus algorithms can usually be established under relatively weak requirements. 
Common challenges include the handling of node failures, transmission delays on the transfer of data between agents, packet losses in wireless communication networks, and inaccurate sensor measurements. 
As a result, it is imperative to address agreement problems that consider networks of dynamical agents, possibly with directed information flow, under delays and/or changing topologies.
One of the most well known consensus problems is the so-called \emph{average consensus} problem in which agents aim to reach the average of their initial values (see, for example, \cite{jadbabaie_coordination_2003, xiao_fast_2004}). %
} 

This work is based on \emph{synchronous} and \emph{asynchronous} finite-time average consensus algorithms. 
There have been several works on synchronous finite-time average consensus algorithms due to their use \emph{i)} in  resource-constrained applications (such as wireless sensor networks) since they save energy and computational resources, and \emph{ii)} in applications in which the result of the consensus algorithm is used in real-time to perform other subsequent tasks (such as smart energy networks). 
Nevertheless, there have not been any works for the asynchronous case when consensus is achieved in a finite number of steps. 

The model of asynchrony considered herein allows for heterogeneous, but bounded computation and communication delays, thus quantifying the degree of asynchrony by a bound on the time-delays. 
It is highlighted that the nodes are not required to know the bound for the execution of the algorithm. 
Finite-time average consensus in the presence of delays in directed graphs has been studied mainly by~\cite{charalambous_distributed_2015} for exact average consensus and more recently by~\cite{khatana_gradient-consensus_2020} for approximate average consensus. 
Moreover, the bound provided has linear dependency to the maximum delay within the network multiplied with its diameter. 
This is a powerful result, as not only allows its applicability in traditional data centers where consensus can be achieved quickly but also in delay tolerant networks. 
This particular category includes numerous types of networks with some notable examples being collaborative autonomous agents, mobile phones, IoT clusters, and others.

\subsection{Organization}

The remainder of the paper is organized as follows. 
In Section~\ref{sec:system}, we give the necessary notation and describe the model of the system. 
In Section~\ref{sec:preliminaries}, we provide the necessary background knowledge needed for the development of our results. 
In Section~\ref{sec:mainresults}, we first provide the problem under consideration and then we modify it so that it is formulated as a distributed coordination. 
Next, in Sections~\ref{sec:distributed_synchronous} and~\ref{sec:distributed_asynchronous} we propose a synchronous and an asynchronous finite-time distributed algorithm, respectively, that solve the problem approximately. 
In Section~\ref{sec:simulations}, we demonstrate the efficacy of our proposed algorithms. 
In Section~\ref{sec:discussion}, we provide a quantitative discussion of the contributions herein and discuss our findings.
In Section~\ref{sec:conclusions} we draw conclusions and discuss possible directions for future work.

\section{Notation and System Model}
\label{sec:system}

\subsection{Notational Conventions}
\label{subsec:notation}

The set of real (integer) numbers is denoted by $\mathds{R}$ ($\mathds{Z}$) and the set of non-negative real (integer) numbers is denoted by $\mathds{R}_{+}$ ($\mathds{Z}_{+}$). 
Vectors are denoted by small letters whereas matrices are denoted by capital letters.  $A^T$ denotes the transpose of matrix $A$. The $i^{\text{th}}$ component of a vector $x$ is denoted by $x_i$. 
For $A\in \mathbb{R}^{n\times n}$, $a_{ij}$ denotes the entry in row $i$ and column $j$. 

In multi-component systems with fixed communication links (edges), the exchange of information between components (nodes) can be conveniently captured by a graph $\mathcal{G}(\mathcal{V}, \mathcal{E})$ of order $n$ $(n \geq 2)$, where $\mathcal{V} = \{v_1,v_2,\ldots,v_n\}$ is the set of nodes and $\mathcal{E} \subseteq \mathcal{V} \times \mathcal{V}$ is the set of edges. 
An edge from node $v_{i}$ to node $v_{j}$ is denoted by $\varepsilon_{ji} = (v_{j}, v_{i})\in \mathcal{E}$ and represents a communication link that allows node $v_{j}$ to receive information from node $v_{i}$. 
A graph is said to be undirected if and only if $\varepsilon_{ji} \in \mathcal{E}$ implies $\varepsilon_{ij}  \in \mathcal{E}$. 
A digraph is called connected if there exists a path from each vertex $v_{i}$ of the graph to each vertex $v_{j}$ ($v_{j} \neq v_{i}$). 
The diameter $D$ of a graph is the longest shortest path between any two nodes in the network.

In \emph{digraphs}, nodes that can transmit information to node $v_{j}$ directly are said to be in-neighbors of node $v_{j}$ and belong to the set $\mathcal{N}^{-}_j=\{ v_{i} \in \mathcal{V} \; | \; \varepsilon_{ji} \in \mathcal{E} \}$. 
The cardinality of $\mathcal{N}^{-}_{j}$, is called the \emph{in-degree} of $v_{j}$ and is denoted by $\mathcal{D}^{-}_{j}=\left| \mathcal{N}^{-}_j \right|$. 
The nodes that receive information from node $v_{j}$ belong to the set of out-neighbors of node $v_{j}$, denoted by $\mathcal{N}^{+}_j=\{ v_l \in \mathcal{V} \; | \; \varepsilon_{lj} \in \mathcal{E} \}$. 
The cardinality of $\mathcal{N}^{+}_j$, is called the \emph{out-degree} of $v_{j}$ and is denoted by $\mathcal{D}^{+}_{j}= \left| \mathcal{N}^{+}_j \right|$.

{In the type of algorithms we will consider, we will associate a positive weight $p_{ji}$ for each edge $\varepsilon_{ji} \in \mathcal{E} \cup \{ (v_j, v_j) \; | \: v_j \in \mathcal{V} \}$. The nonnegative matrix $P = [p_{ji} ] \in \mathbb{R}_{+}^{n\times n}$ (with $p_{ji}$ as the entry at its $j$th row, $i$th column position) is a weighted adjacency matrix (also referred to as weight matrix) that has zero entries at locations that do not correspond to directed edges (or self-edges) in the graph. In other words, apart from the main diagonal, the zero-nonzero structure of the adjacency matrix $P$ matches exactly the given set of links in the graph. }

\subsection{System Model}
\label{subsec:model}

In our setup, we assume a set $\mathcal{V}$ of server compute nodes, denoted by $v_{i}\in\mathcal{V}$, which also operate as resource schedulers; this is a frequent occurrence in modern data-centers. 
All participating schedulers are interconnected with bidirectional communication links and, thus, the network topology forms a connected undirected graph.

A job is defined as a group of tasks and $\mathcal{J}$ as the set of all jobs to be scheduled. 
Each job $b_{j} \in \mathcal{J}$, $j\in\{1,\ldots, |\mathcal{J}| \}$ requires $\rho_{j}$ cycles to be executed and their individual estimated cost is assumed to be known before the optimization starts.
The time horizon of the optimization (denoted by $T_{h}$) is defined as the time period for which the optimization is considering the jobs to be running on the server nodes, before the next optimization decides the next allocation of resources. 
Hence, the \CPU capacity of each node, considered during the optimization, is computed as
\begin{align}
    \pi_i^{\max} \coloneqq c_i T_h, 
\end{align}
where $c_i$ is the sum of all clock rate frequencies of all processing cores of node $v_{i}$ given in cycles/second. 
The \CPU availability for node $v_{i}$ at optimization step $m$ (i.e., at time $mT_{h}$) is given by
\begin{align}
    \pi_i^{\mathrm{avail}}[m] \coloneqq \pi_i^{\max} - u_{i}[m], 
\end{align}
where $u_{i}[m]$ is the number of unavailable/occupied cycles due to predicted or known utilization from already running tasks on the server over the time horizon $T_{h}$ at step $m$.

\begin{assumption}
    Since the time horizon $T_h$ is a parameter chosen, we assume that the time horizon is chosen such that the total amount of resources demanded at a specific optimization step $m$, denoted by $\rho[m] \coloneqq \sum_{b_j[m] \in \mathcal{J}[m]} \rho_{j}[m]$, is smaller than the total capacity of the network available, given by $\pi^{\mathrm{avail}}[m] \coloneqq \sum_{v_{i}\in \mathcal{V}} \pi_i^{\mathrm{avail}}[m]$, i.e., $\rho[m] \leq \pi^{\mathrm{avail}}[m]$. 
\end{assumption}

This assumption indicates that there is no more demand than the available resources. 
In case this assumption is violated, the solution will be that all resources are being used and some workloads will not be scheduled, due to lack of resources.
The workloads selected to be discarded are arbitrary and the purging does not adhere to any particular priority policy; the jobs are scheduled on a first-come, first-scheduled basis.
A more sophisticated priority mechanism could be deployed whose task would be to allocate a subset of the requests only, based on some optimization problem (taking into account deadlines, etc). However, since the prioritization problem is out of the scope of this paper, this problem will be addressed separately.

In particular, in the context of this work, we focus on CPU utilization as it is one of the most important and precious resource for many workloads. 
We note that while communication costs are also important, as data center networking becomes faster, we believe that CPU remains the most important resource. 
Notably, forecasting resource demands can be challenging and costly~\cite{burns_borg_2016,schwarzkopf_omega_2013}, without necessarily providing the expected gains.
Further, there are certain types of commonly encountered jobs (e.g., recurring batch processing workloads) that have known a-priory demands. 
In fact, such workloads are frequently encountered in enterprise environments as such current data center scheduler's (e.g., Google's Kubernetes~\cite{bernstein_containers_2014}) operate on known resource demands.

\section{Preliminaries}\label{sec:preliminaries}

\subsection{Average Consensus}
\label{subsec:coordination}

{In a synchronous setting, each node $v_j$ updates and sends its information to its out-neighbors (and also receives similar information from its in-neighbors) at discrete times $t(0), t(1), t(2), \ldots$. 
We index nodes' information states and any other information at time $t(k)$ by $k$. We use $x_j[k]\in \mathbb{R}$ to denote the information state of node $v_{j}$ at time $t_k$. 
In our setup,} each node $v_{j}$  updates and sends its information regarding its input workload $\ell_j$ ($\ell_j$ is the summation of workloads at node $v_{j}$), estimated needed utilization for other tasks $u_{j}$, and capacity $\pi_j^{\max}$ to its out-neighbors .

At each step, node $v_{j}$ updates its information state $x_{j}[k]$ by combining the available information received by its neighbors $x_{i}[k]$ ($v_{i}\in \mathcal{N}^{-}_j$) using a weighted linear combination, i.e., 
\begin{align}\label{eq:1_1}
    x_{j}[k+1] =p_{jj}[k] x_{j}[k] + \sum_{v_{i} \in \mathcal{N}^{-}_j}  p_{ji}[k]  x_{i}[k] \; , \ \  k\geq 0 \;,
\end{align}
where  $x_{j}[0] \in \mathbb{R}$ is the initial state of node $v_{j}$. 
The positive weights $p_{ji}[k]$ capture the weight of the information inflow from node $v_{i}$ to node $v_{j}$ at time $k$ (note that unspecified weights in $P$ correspond to pairs of nodes $(v_{j},v_{i})$ that are not connected and are set (without loss of generality) to zero, i.e., $p_{ji}[k]=0$, $\forall \varepsilon_{ji} \notin \mathcal{E}$). 
If we let $x[k]=(x_1[k] \ \ x_2[k] \ \ \ldots  \ \  x_n[k] )^T$ and $P[k] = [p_{ji}[k] ] \in \mathbb{R}_{+}^{n\times n}$, then~\eqref{eq:1_1} can be written in matrix form as
\begin{align}\label{eq:2_1}
    x[k+1] =P[k] x[k] ,
\end{align}
\noindent where $x[0] = (x_1[0] \ \ x_2[0] \ \ \ldots  \ \  x_n[0] )^T \equiv x_0$. 
In this work, we consider a static network; as a result, the graph remains invariant. 
In this case, the weights can be chosen to be constant for all times $k$ (i.e., $p_{ji}[k]=p_{ji} \ \forall k$), and equation~\eqref{eq:2_1} can be expressed as \begin{align}\label{eq:2_2}
    x[k+1] =P x[k] .
\end{align}

{%
We say that the nodes asymptotically reach average consensus if $\lim_{k \rightarrow \infty} x_j[k] = \frac{\sum_i x_i[0]}{n}$ for all $v_j \in \mathcal{V}$. 
The necessary and sufficient conditions for \eqref{eq:2_2} to reach average consensus are the following \cite{xiao_fast_2004}: (a) $P$ has a simple eigenvalue  $\lambda_i(P)=1$ with left eigenvector $\mathbb{1}^T$ and right eigenvector $\mathbb{1}$, and (b) all other eigenvalues of P ($\lambda_j(P), j\neq i$)  have magnitude less than 1 ($|\lambda_j(P)|<1$). 
If $P\geq 0$ (as in our case), the necessary and sufficient condition is that $P$ be a primitive doubly stochastic matrix. 
}

\subsection{Ratio consensus} 
\label{subsec:RATIO}

Dominguez-Garc\'{i}a and Hadjicostis in~\cite{dominguez-garcia_coordination_2010}, propose an algorithm that solves the average consensus problem in a directed graph in which each node $v_{j}$ distributively sets the weights on its self-link and outgoing-links to be $p_{lj}=\frac{1}{1+\mathcal{D}_j^+}$, $\forall (v_l, v_{j})\in \mathcal{E}$, so that the resulting weight matrix $P$ is column stochastic, but not necessarily row stochastic. 
Average consensus is reached by using this weight matrix to run two iterations with appropriately chosen initial conditions. The algorithm is stated below for the specific choice of weights mentioned above (which assumes that each node knows its out-degree). 
Note, however, that {the algorithm also works for any set of weights that adhere to the graph structure and form a primitive column stochastic weight matrix}.

\begin{lemma}[\hspace{-0.001cm}\cite{dominguez-garcia_coordination_2010}]
\label{lemma_christoforos} 
    Consider a strongly connected graph $\mathcal{G}(\mathcal{V}, \mathcal{E})$. Let $y_{j}[k]$ and $z_{j}[k]$ (for all $v_{j} \in \mathcal{V}$ and $k=0,1,2,\ldots$) be the result of the iterations
    \begin{subequations}
        \begin{align}\label{eq:4}
            x_j[k+1]=p_{jj} x_j[k]+ \sum_{v_{i} \in \mathcal{N}^{-}_j} p_{ji} x_i[k] \; , \\  
            y_{j}[k+1]=p_{jj} y_{j}[k]+ \sum_{v_{i} \in \mathcal{N}^{-}_j} p_{ji} y_{i}[k] \; ,
        \end{align}
    \end{subequations}
    where $p_{lj} = \frac{1}{1 + \mathcal{D}_j^+}$ for $v_l \in \mathcal{N}_j^+$ (zeros otherwise), and the initial conditions are $x[0]=(x_0(1) \ \ x_0(2) \ \ldots \ x_0(|\mathcal{V}|))^T\equiv x_0$ and  $y[0]=\mathbb{1}$. 
    Then, the solution to the average consensus problem can be asymptotically obtained as 
    $$
    \displaystyle \lim_{k\rightarrow \infty} \mu_{j}[k]=\frac{\sum_{v_\ell\in \mathcal{V}} x_0(\ell)}{|\mathcal{V}|} \; , \quad \forall v_{j} \in \mathcal{V} \; ,
    $$
    where $\displaystyle \mu_{j}[k]={x_j[k]}/{y_{j}[k]} \; $.
\end{lemma}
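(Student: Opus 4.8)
The plan is to exploit the linearity of both recursions and the Perron--Frobenius structure of the common weight matrix $P$. Writing both iterations in the matrix form of~\eqref{eq:2_1}, we have $x[k]=P^k x_0$ and $y[k]=P^k\mathbb{1}$, so the entire argument reduces to understanding the asymptotic behaviour of $P^k$. First I would record the three structural properties of $P$ induced by the stated choice of weights $p_{lj}=1/(1+\mathcal{D}_j^+)$: the matrix is (i) entrywise nonnegative; (ii) column stochastic, since the weights assigned by each node $v_j$ to its self-link and its out-links sum to exactly one, which is precisely the statement $\mathbb{1}^T P=\mathbb{1}^T$; and (iii) primitive, because strong connectivity of $\mathcal{G}$ makes $P$ irreducible while the strictly positive self-weights $p_{jj}=1/(1+\mathcal{D}_j^+)>0$ guarantee aperiodicity.

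Given primitivity, I would invoke the Perron--Frobenius theorem. Column stochasticity identifies $\mathbb{1}$ as the left eigenvector associated with the eigenvalue $1$, which is simple and strictly dominates all other eigenvalues in modulus; let $\phi$ denote the corresponding right eigenvector, $P\phi=\phi$, which Perron--Frobenius guarantees can be taken entrywise positive. The standard consequence is convergence of the powers to the rank-one limit
\begin{align}\label{eq:perron_limit}
    \lim_{k\to\infty} P^k = \frac{\phi\,\mathbb{1}^T}{\mathbb{1}^T\phi} \; .
\end{align}
Applying~\eqref{eq:perron_limit} to each recursion gives $\lim_{k\to\infty} x[k]=\phi\,(\mathbb{1}^T x_0)/(\mathbb{1}^T\phi)$ and $\lim_{k\to\infty} y[k]=\phi\,(\mathbb{1}^T\mathbb{1})/(\mathbb{1}^T\phi)$, where $\mathbb{1}^T x_0=\sum_{v_\ell\in\mathcal{V}} x_0(\ell)$ and $\mathbb{1}^T\mathbb{1}=|\mathcal{V}|$.

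Finally I would form the ratio componentwise. Because $\phi$ is strictly positive, each limiting denominator $\lim_k y_j[k]=\phi_j |\mathcal{V}|/(\mathbb{1}^T\phi)$ is nonzero, so $\mu_j[k]=x_j[k]/y_j[k]$ converges to the ratio of the limits and the common factor $\phi_j/(\mathbb{1}^T\phi)$ cancels, leaving $\sum_{v_\ell\in\mathcal{V}} x_0(\ell)/|\mathcal{V}|$ independently of $j$, as claimed. The main obstacle is not the algebra but establishing the spectral picture cleanly: one must argue that strong connectivity together with the positive diagonal yields an irreducible \emph{and} aperiodic matrix, so that Perron--Frobenius applies and the subdominant eigenvalues decay, and one must note that the strict positivity of $\phi$ is exactly what makes the limiting ratio well defined, legitimising the step $\lim_k (x_j[k]/y_j[k]) = (\lim_k x_j[k])/(\lim_k y_j[k])$.
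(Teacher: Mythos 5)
Your proof is correct and follows essentially the same route the paper takes where it uses this lemma: the weight matrix $P$ is nonnegative, column stochastic, and primitive (irreducible by strong connectivity, aperiodic by the positive self-weights), so its powers converge to the rank-one Perron limit $P^k \to \phi\,\mathbb{1}^T/(\mathbb{1}^T\phi)$, and the strictly positive component $\phi_j$ cancels in the ratio $\mu_j[k]$. The paper, which cites the lemma rather than reproving it, writes this same limit as $c\,\mathbb{1}^T$ with $c$ the normalized right Perron eigenvector and cancels $c_j$ in exactly the same way, so your argument matches the intended one.
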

\subsection{Synchronous \texorpdfstring{$\max-$consensus}{max-consensus}} 
\label{subsec:max}

{It is desired that each node $v_{j}\in \mathcal{V}$ of a network reaches consensus on the maximum value of the initial states/measurements under the assumption that all the nodes have a single real-valued state that they update based on local received states. Each node should reach the value $x_{\max} = \max_{v_{j}\in \mathcal{V}}x_j[0] $.} 
The $\max-$ consensus algorithm is a simple algorithm for computing the maximum value {(of, e.g., initial measurements in a sensor network)} in a distributed fashion~\cite{cortes_distributed_2008}. 
When the updates are synchronous, 
{in the absence of communication noise (as it is the case in this work), $\max-$ consensus can be done by having each node $v_{j}\in \mathcal{V}$ update the state value with the largest received value in every iteration;}
the update rule is as follows:
\begin{align}
x_j[k+1] = \max_{v_{i}\in \mathcal{N}_j^{-} \cup \{v_{j}\}}\{ x_i[k] \}.
\end{align}
It has been shown (see, e.g.,~\cite[Theorem 5.4]{giannini_convergence_2013}) that this algorithm converges to the maximum value among all nodes in a finite number of steps $s$, $s \leq D$.  Similar results hold for the $\min-$consensus algorithm.

\subsection{Optimization Problem}
\label{subsec:optimization}

In a network $\mathcal{G} =(\mathcal{V},\mathcal{E})$ of $N=|\mathcal{V}|$ nodes, each node is endowed with a scalar quadratic cost function $f_i : \mathbb{R}^N \mapsto \mathbb{R}$. 
Most cases consider a quadratic cost function of the form:
\begin{align}\label{eq:fi}
f_i(z) = \frac{1}{2}\alpha_i (z-\rho_{i})^2,
\end{align}
where $\alpha_i >0$ and $\rho_{i} \in \mathbb{R}$ (in our case it is the demand in node $v_{i}$ and it is a non-negative number). 
Parameter $z$ is a function of the workload and it will be explained shortly. 
The global function $f : \mathbb{R}^N \mapsto \mathbb{R}$ is the sum of the cost function~\eqref{eq:fi} of each node $v_{i}$. 
The main goal of the nodes is to allocate the jobs in order to minimize the cost function in a distributed fashion, by communicating with their neighbors only. 
Each node is thus required to solve the following optimization problem:
\begin{align}\label{opt:1}
    z^* =  \arg\min_{z\in \mathcal{Z}} \sum_{v_{i} \in \mathcal{V}} f_i(z) , 
\end{align}
where $\mathcal{Z}$ is the set of feasible values of parameter $z$. The solution of~\eqref{opt:1} in closed form can be expressed as
\begin{align}\label{eq:closedform}
    z^* =  \frac{\sum_{v_{i} \in \mathcal{V}} \alpha_i \rho_{i}}{\sum_{v_{i} \in \mathcal{V}} \alpha_i}.
\end{align}
Note that by setting $\alpha_i =1$ for all $v_{i}\in\mathcal{V}$, the solution is the average consensus.

\section{Problem Formulation}\label{sec:mainresults}

\subsection{Problem Statement}
\label{subsec:problem-statement}

In our case, we are interested in finding a solution in which the total workload at each server node is balanced. 
This translates to having all server nodes having the same percentage of capacity being utilized during the execution of the tasks, i.e., 
\begin{align}\label{cond:balance}
    \frac{w_i^*[m] +u_{i}[m]}{\pi_i^{\max}} &= \frac{w_j^*[m] +u_{j}[m]}{\pi_j^{\max}} \\
    &= \frac{\rho[m] + u_{\mathrm{tot}}[m]}{\pi^{\max}}~\forall v_{i}, v_{j} \in \mathcal{V}, \nonumber
\end{align}
where $w_i^*[m]$ is the \emph{optimal} workload to be added to server node $v_{i}$ at optimization step $m$, $\pi^{\max} \coloneqq \sum_{v_{i}\in \mathcal{V}} \pi_i^{\max}$ and $u_{\mathrm{tot}}[m]=\sum_{v_{i}\in \mathcal{V}} u_{i}[m]$. 

The aim of this work is to find the optimal solution at every optimization step $m$ via a distributed coordination algorithm run for a finite number of steps.

\subsection{Modification of the Optimization Problem}
\label{subsec:modifiedoptimization}

To achieve the requirement set in~\eqref{cond:balance}, we modify~\eqref{eq:fi} accordingly. Let 
\begin{align}
    z[m] \coloneqq \frac{w_i[m] +u_{i}[m]}{\pi_i^{\max}}.
\end{align}
For simplicity of exposition, and since we consider a single optimization step, we drop the index $m$. 
Then, the cost function $f_i(z)$ in~\eqref{eq:fi} is given by
\begin{align}\label{eq:fiz}
    f_i(z) = \frac{1}{2}\pi_i^{\max} \left(z- \frac{\rho_{i}+u_{i}}{\pi_i^{\max}} \right)^2,
\end{align}
and the solution to problem~\eqref{opt:1} according to~\eqref{eq:closedform} is 
\begin{align}\label{eq:closedform1}
    z^* =  \frac{\sum_{v_{i} \in \mathcal{V}} \pi_i^{\max} \frac{\rho_{i}+u_{i}}{\pi_i^{\max}}}{\sum_{v_{i} \in \mathcal{V}} \pi_i^{\max}} = \frac{\rho + u_{\mathrm{tot}}}{\pi^{\max}}.
\end{align}
In other words, the nodes find the proportion of workload that each of them should have. From that each node is able to deduce the workload $w_i^*$ to receive, i.e.,
\begin{align}\label{eq:optimal_workload}
    w_i^*  = \frac{\rho + u_{\mathrm{tot}}}{\pi^{\max}} \pi_i^{\max}  -u_{i}.
\end{align}

\section{A synchronous distributed algorithm}
\label{sec:distributed_synchronous}

The solution that we are aiming for should satisfy the balance condition in~\eqref{cond:balance}. 
For each node to be able to compute the optimal workload $w_i^*$ in~\eqref{eq:optimal_workload}, the total workload $\rho$, the total estimated utilization needed for other tasks $u_{\mathrm{tot}}$, and the total capacity of the network $\pi^{\max}$ are needed. 
For solving the problem in a distributed fashion we assume the following:
\begin{assumption}\label{assumption:2}
The graph is static and strongly connected.
\end{assumption}
This assumption is, in general, valid even for large data-centers, since their topology is not expected to change for prolonged periods of time and remains mostly static, since failures are rare. Also, fault diagnostic mechanisms can be used to detect such failures and restore the connectivity of the network. Moreover, changes in the network can be handled by our algorithm, provided that the number of outgoing links can be found at each node in a distributed fashion, either because the links are bidirectional or because specific recovery schemes are deployed; see, for example, \cite{Hadjicostis:2012-CDC, 2016:Hadjicostis_TAC}.

Under Assumption~\ref{assumption:2}, running the ratio consensus algorithm~\eqref{eq:4} with initial conditions $y_{j}[0]=\ell_j+u_{j}$ and $z_ij[0]=\pi^{\max}$, we obtain
\begin{align*}
    \lim_{k\to\infty}y[k] &= \lim_{k\to\infty} P^k y[0] = c \mathbb{1}^T y[0] = %
    c(\rho+u_{\mathrm{tot}}) , \\  
    \lim_{k\to\infty}y[k] &= \lim_{k\to\infty} P^k z[0] 
    = c \mathbb{1}^T z[0] = %
    c\pi^{\max} ,
\end{align*}
where $c$ is a vector (the left eigenvector of column matrix $P$).
Therefore,
\begin{align*}
    \lim_{k\to \infty} \mu_{j}[k]= \lim_{k\to \infty} \frac{x_j[k]}{y_{j}[k]} = \frac{c_j(\rho+u_{\mathrm{tot}})}{c_j\pi^{\max} } = \frac{\rho+u_{\mathrm{tot}}}{\pi^{\max}}.
\end{align*}

\subsection{Finite-time implementation}

Since the optimization is repeated periodically, the consensus algorithm should stop way before the beginning of the next optimization cycle, since the resources should be allocated and have the tasks allocated (and process as many of them as possible) before the next bunch of tasks arrives{; see Fig.~\ref{fig:timedurations}}. 
However, often it is impossible or undesirable to predetermine the number of steps needed to stop the iterations. 
Towards this end, we deploy an algorithm that allows the nodes to distributively stop iterations in a finite number of steps, tolerating some deviation from the exact optimal solution. 
Before we proceed with the finite time implementation, we make the following assumption:
\begin{assumption}\label{assumption:3}
    The diameter of the network $D$ is known to all server nodes.
\end{assumption}

{Under Assumption~\ref{assumption:3},} Cady \emph{et al.} in~\cite{cady_finite-time_2015} proposed an algorithm which is based on the ratio-consensus protocol~\cite{dominguez-garcia_coordination_2010} and takes advantage of $\min$- and $\max$-consensus iterations to allow the nodes to determine the time step, $k_0$, when their ratios, $\{\mu_{j}[k_0] | v_{j}\in \mathcal{V} \}$, are within $\epsilon$ of each other.

{First, we present} the synchronous case, {in order to demonstrate the main idea before we present the asynchronous case. Towards this end,} we adopt the algorithm proposed by Cady \emph{et al.} in \cite{cady_finite-time_2015} \emph{mutatis mutandis}. 
More specifically, the algorithm makes use of the following ideas:
\begin{list5}
    \item {
        Each node $v_{j}$ runs ratio consensus iteration, as described in Lemma~\ref{lemma_christoforos}; in our case, we use initial conditions $y_{j}[0]=\ell_j+u_{j}$ and $z_{j}[0]=\pi_j^{\max}$.
    } 
    \item {
        At the same time, each node maintains two auxiliary states, $m_j[k]$ and $M_j[k]$, which are updated using $\min$- and $\max$-consensus, respectively.
    }
    \item {
        Every $D$ steps (where $D$ is the diameter of the graph) each node checks whether $|M_j[k]-m_j[k]|<\epsilon$. 
        If this is the case, then the ratios for all nodes are close to the asymptotic value and it stops iterating. 
        Otherwise, $m_j[k]$ and $M_j[k]$ are reinitialized to {$\mu_{j} [k]$}.
    }
\end{list5}
The algorithm, adopted to our case, is described in Algorithm~\ref{Algorithm_finitetimeRC} for digraphs (which means it holds for undirected graphs as well, that we consider in this case).
\begin{algorithm}
\caption{Distributed Finite-Time Termination for Ratio Consensus}
\begin{algorithmic}
    \STATE \textbf{Input:} A strongly connected digraph $\mathcal{G}=(\mathcal{V}, \mathcal{E})$. Each node $v_{j} \in \mathcal{V}$ knows its out-degree $\mathcal{N}_j^{+}$.
    Initial values are $y_{j}[0]=\ell_j+u_{j}$ and $z_{j}[0]=\pi_j^{\max}$, and tolerance $\epsilon$.
    \STATE \textbf{set} $M_j[0]=+\infty, ~ m_j[0]=-\infty,  {\rm flag}_j[0]=0, \mu_{j}= \frac{y_{j}[0]}{z_{j}[0]}$
    \STATE \textbf{set} $p_{lj}=\frac{1}{1+d_j^{\rm out}}$, $\forall~v_l\in \mathcal{N}_j^{+} \cup \{v_{j}\}$ (zero otherwise)
    \FOR{$k \geq 0$}
        \WHILE{${\rm flag}_j[k]=0$}
            \IF{$k \mod D =0$ and $k\neq 0$}
                \IF{$|M_j[k]-m_j[k]|< \epsilon$}
                    \STATE \textbf{set} ${\rm flag}_j[k] = 1$
                \ENDIF
                \STATE \textbf{set} $M_j[k]=m_j[k]=\mu_{j}[k] = \frac{y_{j}[k]}{z_{j}[k]}$
            \ENDIF
            \STATE  \textbf{broadcast} to all $v_l \in \mathcal{N}_j^{+}$: \newline $p_{lj}y_{j}[k]$, $p_{lj}z_{j}[k]$, $M_j[k]$, $m_j[k]$
            \STATE  \textbf{receive} from all $v_{i} \in \mathcal{N}_j^{-}$: \newline $p_{ji}y_{i}[k]$, $p_{ji}z_i[k]$, $M_i[k]$, $m_i[k]$
            \STATE  \textbf{compute} \\
            \STATE $y_{j}[k]\leftarrow \sum_{v_{i}\in \mathcal{N}_j^{-} \cup \{v_{j}\}}p_{ji}y_{i}[k]$
            \STATE $z_{j}[k]\leftarrow  \sum_{v_{i}\in \mathcal{N}_j^{-} \cup \{v_{j}\}}p_{ji}z_{i}[k]$
            \STATE $M_j[k]\leftarrow \max_{v_{i}\in \mathcal{N}_j^{-} \cup \{v_{j}\}} M_i[k]$
            \STATE $m_j[k]\leftarrow \min_{v_{i}\in \mathcal{N}_j^{-} \cup \{v_{j}\}} m_i[k]$
        \ENDWHILE
    \ENDFOR
\end{algorithmic}
\label{Algorithm_finitetimeRC}
\end{algorithm}

\begin{remark}
The number of iterations needed for the distributed algorithm to terminate at optimization step $m$, $T_c[m]$, is a multiple of the diameter of the network. 
As it will be shown in the simulations, the distributed algorithm converges fast and it only needs a fraction of the optimization step of horizon $T_h$; see an illustration in~\Cref{fig:timedurations}.
\end{remark}

\begin{figure}[t]
    \includegraphics[width=0.97\columnwidth]{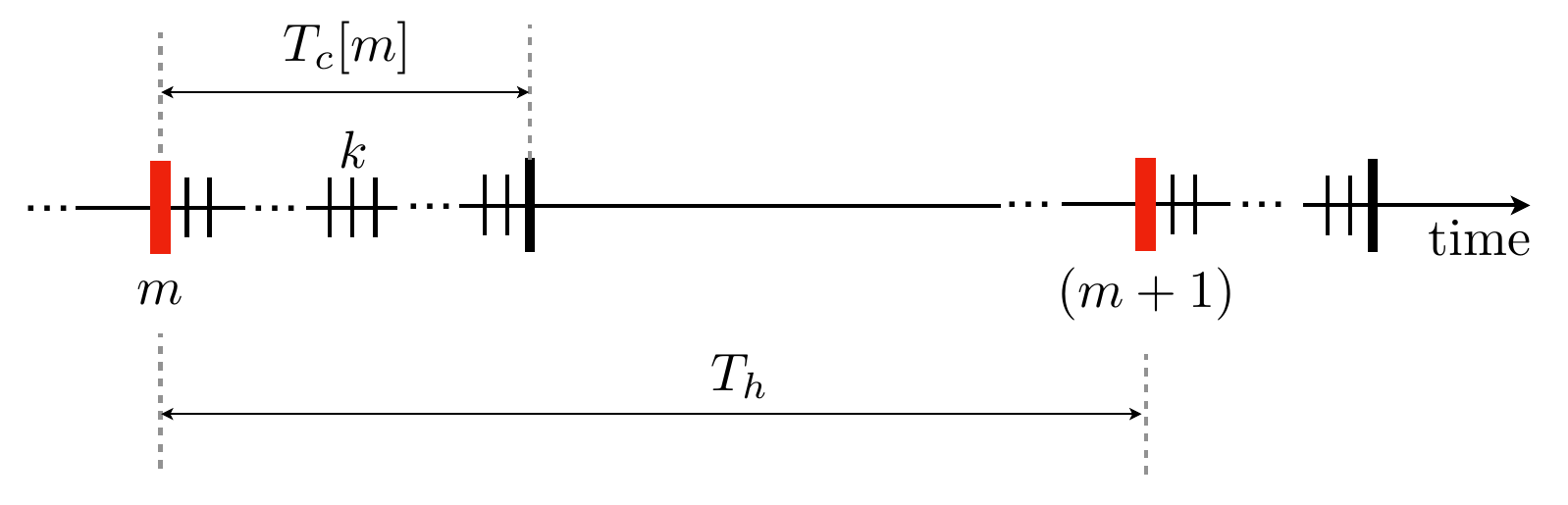}
    \caption{
    At every optimization step of horizon $T_{h}$, the resource allocation optimization requires $T_{c}[m]$ steps to converge. 
    Note that $T_c[m]$ is much smaller in duration that the time horizon of the optimization.
    }
    \label{fig:timedurations}
\end{figure}

\section{An asynchronous distributed algorithm}\label{sec:distributed_asynchronous}

Resource allocation in data centers gives rise to large-scale problems and networks, which naturally call for asynchronous solutions.
Let $t(0)\in \mathbb{R}_{+}$ the time at which the iterations for the optimization start. 
We assume that there is a set of times $\mathcal{T}=\{t(1), t(2),t(3),\ldots\}$ at which one or more nodes transmit some value to their neighbors. 
A message that is received at time $t(k_1)$ and processed at time $t(k_2)$, $k_2>k_1$, experiences a process delay of $t(k_1)-t(k_2)$ (or a time-index delay $k_2-k_1$). 
In Fig.~\ref{fig:async}, we show through a simple example how the time steps evolve for each node in the network; with $t_j(k)$ we denote the time step at which iteration $k$ takes place for node $v_{j}$.
\begin{figure}[ht]
    \includegraphics[width=0.97\columnwidth]{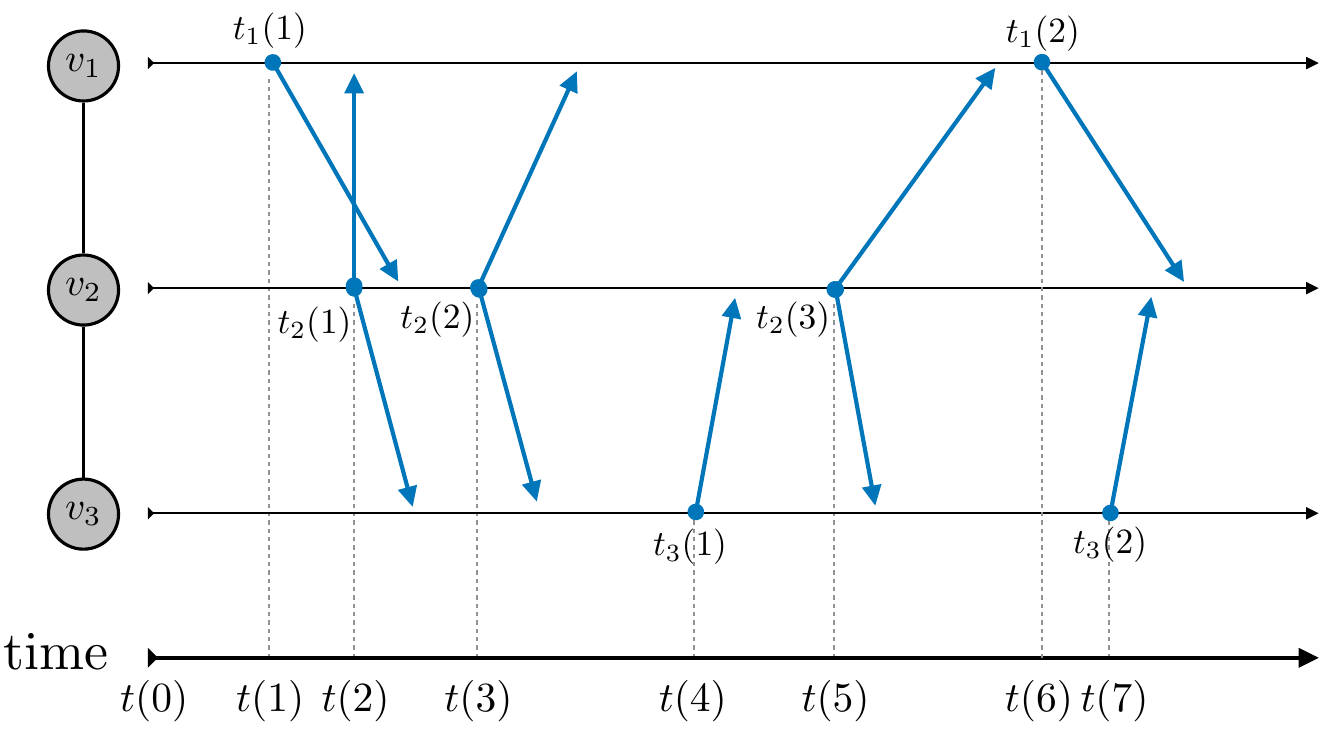}
    \caption{
    A simple example of a network consisting of 3 nodes. 
    In the timeline of each node, blue ticks indicate an iteration for node $v_{i}$ and the arrows indicate the transmissions. 
    The time in between transmissions is the processing delay. The time from the beginning of the transmission to the end (arrow) is the transmission delay.
    }
    \label{fig:async}
\end{figure}

\begin{assumption}
There exists an upper bound $B$ on the time-index steps that is needed for a node to process the information received from another node.
\end{assumption}

\subsection{Asynchronous \texorpdfstring{$\max-$consensus}{max-consensus}}

When the updates are asynchronous, for any node $v_{j}\in \mathcal{V}$, the update rule is as follows~\cite{giannini_convergence_2013}:
\begin{align*}
    x_j[t_j(k+1)] = \max_{v_{i}\in \mathcal{N}_j^{-}[t_j(k+1)] \cup \{v_{j}\}}\{ x_i[t_j(k)+\theta_{ij}(k)] \},
\end{align*}
where $x_i[t_j(k)+\theta_{ij}(k)]$ are the states of the in-neighbors $\mathcal{N}_j^{-}[t_j(k+1)]$ available at the time of the update. 
Variable $\theta_{ij}(k) \in \mathbb{R}$, evaluated with respect to the update time $t_j(k)$, is used here to express asynchronous state updates occurring at the neighbors of node $v_{j}$, between two consecutive updates of the state of node $v_{j}$.
It has been shown in~\cite[Lemma 5.1]{giannini_convergence_2013} that this algorithm converges to the maximum value among all nodes in a finite number of steps $s$, $s \leq B D$.

\subsection{Asynchronous (Robustified) ratio consensus}

An adaptation of the above approach to a protocol where each node updates its information state $x_{j}[k+1]$ by combining the available (possibly delayed) information received by its neighbors $x_{i}[s]$ ($s\in \mathbb{Z}, s\leq k,~v_{i}\in \mathcal{N}^{-}_j$) using constant positive weights $p_{ji}$ was developed in~\cite{hadjicostis_average_2014}. 
Integer $\bar{\tau}_{ji}[k] \geq 0$ is used to represent the delay of a message sent from node $v_{i}$ to node $v_{j}$ at time instant $k$. 
We require that $0\leq \tau_{ji}[k] \leq \bar{\tau}_{ji} \leq \bar{\tau}$ for all $k\geq 0$ for some finite $\bar{\tau} = \max\{ \bar{\tau}_{ji} \}$, $\bar{\tau} \in \mathbb{Z}_{+}$. 
We make the reasonable assumption that $\tau_{jj}[k]=0$, $\forall v_{j} \in \mathcal{V}$, at all time instances $k$ (i.e., the own value of a node is always available without delay). 
Each node updates its information state according to:
\begin{align*}
    x_{j}[k+1] &=p_{jj}x_{j}[k] + \sum_{v_{i} \in \mathcal{N}^{-}_j} \sum_{r=0}^{\bar{\tau} } p_{ji} x_{i}[k-r]I_{k-r,ji}[r] , 
\end{align*}
\noindent for $k\geq 0$, where  $x_{j}[0] \in \mathbb{R}$ is the initial state of node $v_{j}$; $p_{ji}$ $\forall \varepsilon_{ji} \in \mathcal{E}$  form $P=[p_{ji}]$ that adheres to the graph structure, and is primitive column stochastic; and
\begin{align}\label{eq:indicatorfunction}
    I_{k,ji}(\tau) =
    \begin{cases} 1, & \text{if $\tau_{ji}[k] =\tau$,}
    \\
    0, &\text{otherwise.}
    \end{cases}
\end{align}

\begin{lemma}~\cite[\emph{Lemma 2}]{hadjicostis_average_2014}
\label{our_lemma}
Consider a strongly connected digraph $\mathcal{G}(\mathcal{V}, \mathcal{E})$. Let $y_{j}[k]$ and $z_{j}[k]$ (for all $v_{j} \in \mathcal{V}$ and $k=0,1,2,\ldots$) be the result of the iterations
\begin{align}\label{eq:ratio-delays}
    y_{j}[k+1] &=p_{jj}y_{j}[k] + \sum_{v_{i} \in \mathcal{N}^{-}_j} \sum_{r=0}^{\bar{\tau} } y_{ji}[k-r]I_{k-r,ji}[r] \; , \\
    z_{j}[k+1] &=p_{jj}z_{j}[k] + \sum_{v_{i} \in \mathcal{N}^{-}_j} \sum_{r=0}^{\bar{\tau} } z_{ji}[k-r]I_{k-r,ji}[r] \; ,
\end{align}
with $y[0]=(y_0(1) \ \ y_0(2) \ \ldots \ y_0(|\mathcal{V}|))^T\equiv y_0$ and  $z[0]=\mathbb{1}$; $I_{k,ji}$ is an indicator function that captures the bounded delay $\tau_{ji}[k]$ on link $(v_{j}, v_{i})$ at iteration $k$ (as defined in~\eqref{eq:indicatorfunction}, $\tau_{ji}[k] \leq \bar{\tau}$). 
Then, the solution to the average consensus problem can be asymptotically obtained as
$$
\displaystyle \lim_{k\rightarrow \infty} \mu_{j}[k]=\frac{\sum_{v_\ell \in \mathcal{V}} y_0(\ell)}{|\mathcal{V}|} \; , \; \forall v_{j} \in \mathcal{V} \; ,
$$
where $\displaystyle \mu_{j}[k]={y_{j}[k]}/{z_{j}[k]}$.
\end{lemma}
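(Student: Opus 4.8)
The plan is to reduce the delayed, time-varying recursion~\eqref{eq:ratio-delays} to a \emph{delay-free} linear iteration on an \emph{augmented} graph, and then run the same Perron-type argument that underlies Lemma~\ref{lemma_christoforos}. First I would build the augmented network. For every ordered pair $(v_j,v_i)$ with $\varepsilon_{ji}\in\mathcal{E}$ and every admissible delay $r\in\{1,\ldots,\bar{\tau}\}$ I introduce a virtual ``buffer'' node that holds the mass dispatched from $v_i$ toward $v_j$ but not yet delivered after $r$ steps. A packet sent by $v_i$ at time $k$ is deposited in the buffer indexed by its realized delay $\tau_{ji}[k]$ (captured by $I_{k,ji}$) and is shifted one buffer closer to $v_j$ at each subsequent step until it is absorbed into the real state of $v_j$. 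Stacking the real states $y_j[k]$ (respectively $z_j[k]$) together with all buffer contents into an augmented vector $\tilde{y}[k]\in\mathbb{R}^{\tilde{n}}$ (respectively $\tilde{z}[k]$), the recursion becomes $\tilde{y}[k+1]=\tilde{P}[k]\,\tilde{y}[k]$, where $\tilde{P}[k]$ routes the original weights $p_{ji}$ according to the realized delays and applies the deterministic shift maps between consecutive buffers; the initial condition places $y_0$ (respectively $\mathbb{1}$) on the real nodes and zeros on all buffers.

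Next I would verify that each $\tilde{P}[k]$ is column stochastic. The column of a real node $v_i$ distributes its total outgoing weight either directly onto $v_j$ (delay $0$) or onto exactly one buffer (delay $r\geq 1$), while each buffer column forwards its entire content to the next buffer or to the target real node, so in every case the column sums to one. Column stochasticity yields the conservation laws $\mathbb{1}^T\tilde{y}[k]=\mathbb{1}^T y_0=\sum_{\ell}y_0(\ell)$ and $\mathbb{1}^T\tilde{z}[k]=\mathbb{1}^T\mathbb{1}=|\mathcal{V}|$ for all $k$, since buffers only carry mass in transit. Because $p_{jj}>0$ and $z[0]=\mathbb{1}>0$, the real components $z_j[k]$ stay strictly positive, so $\mu_j[k]=y_j[k]/z_j[k]$ is always well defined.

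The crux is the convergence of the backward product $\tilde{\Phi}[k]\coloneqq \tilde{P}[k-1]\cdots\tilde{P}[0]$. Writing $\tilde{\Phi}[k]\,e_j=\tilde{P}[k-1]\big(\tilde{\Phi}[k-1]e_j\big)$, the difference of two columns is a zero-sum vector, and for a column-stochastic matrix $A$ one has $\|A v\|_1\leq (1-\beta(A))\|v\|_1$ for every $v$ with $\mathbb{1}^T v=0$, where $\beta(A)$ measures the overlap between the columns of $A$ and is bounded below whenever $A$ possesses a uniformly positive row. Here I would use the bounded-delay assumption together with strong connectivity of $\mathcal{G}$: any mass deposited in a buffer is absorbed into a real node within $\bar{\tau}$ steps, so over a window of length $L$ on the order of $(D+1)(\bar{\tau}+1)$ the product $\tilde{P}[k+L]\cdots\tilde{P}[k]$ has a real node that is reachable (with positive weight) from every node, i.e.\ a uniformly positive row, giving $\beta\geq\beta_0>0$ uniformly. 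This forces the column-spread of $\tilde{\Phi}[k]$ to contract geometrically, so $\tilde{\Phi}[k]\to\tilde{\pi}\,\mathbb{1}^T$ for a single nonnegative vector $\tilde{\pi}$ (rank one, all columns equal). I expect this ergodicity step to be the main obstacle, since $\tilde{P}[k]$ is genuinely time-varying and one must rule out mass being trapped or cycling indefinitely among the buffers before the uniform positivity over windows can be asserted.

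Finally I would read off the ratio. From $\tilde{y}[k]=\tilde{\Phi}[k]\tilde{y}[0]$ and $\tilde{z}[k]=\tilde{\Phi}[k]\tilde{z}[0]$ the common limit gives, for each real node $v_j$, $y_j[k]\to\tilde{\pi}_j\sum_{\ell}y_0(\ell)$ and $z_j[k]\to\tilde{\pi}_j\,|\mathcal{V}|$, where $\tilde{\pi}_j>0$ on the real nodes. The factor $\tilde{\pi}_j$ cancels in the quotient, leaving $\mu_j[k]=y_j[k]/z_j[k]\to \frac{\sum_{\ell\in\mathcal{V}}y_0(\ell)}{|\mathcal{V}|}$ for every $v_j\in\mathcal{V}$, which is exactly the claimed value and reduces to Lemma~\ref{lemma_christoforos} when $\bar{\tau}=0$.
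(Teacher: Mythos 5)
First, a point of reference: the paper never proves this lemma --- it is imported verbatim, with citation, from \cite[Lemma 2]{hadjicostis_average_2014}, so there is no in-paper proof to compare against. Your reduction is in fact the same strategy used in that cited literature: virtual buffer nodes per edge and per delay value, a time-varying column-stochastic augmented matrix $\tilde{P}[k]$ that routes mass according to the realized delays, mass conservation from column stochasticity, and positivity of $z_j[k]$ from $p_{jj}>0$. All of these steps are correct, and the window-positivity idea (over roughly $(1+\bar{\tau})D$ steps every augmented node sends a uniformly positive fraction of its mass to some real node, because buffered mass is absorbed within $\bar{\tau}$ steps and self-loops let mass ``wait'') is exactly what makes the contraction coefficient work despite the buffer nodes having zero self-weight.

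The genuine gap is in the ergodicity step, where you claim $\tilde{\Phi}[k]\to\tilde{\pi}\,\mathbb{1}^T$ for a \emph{fixed} vector $\tilde{\pi}$ and then read off individual limits $y_j[k]\to\tilde{\pi}_j\sum_\ell y_0(\ell)$ and $z_j[k]\to\tilde{\pi}_j|\mathcal{V}|$. For \emph{time-varying} $\tilde{P}[k]$ this is false in general: $\tilde{\Phi}[k]^T=\tilde{P}[0]^T\cdots\tilde{P}[k-1]^T$ is a \emph{forward} product of row-stochastic matrices, and for forward products weak ergodicity does not imply strong ergodicity. What your contraction argument actually yields is that the columns of $\tilde{\Phi}[k]$ coalesce geometrically to a common probability vector $\tilde{c}[k]$ which itself keeps evolving, $\tilde{c}[k+1]\approx\tilde{P}[k]\tilde{c}[k]$, and need not converge (e.g., alternate long stretches of delay $0$ with long stretches of delay $\bar{\tau}$: the common column drifts between the two corresponding stationary distributions, so $y_j[k]$ and $z_j[k]$ oscillate individually and only their ratio settles). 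The repair is standard but must be made explicit: (i) state the conclusion of the contraction as $\max_{l,l'}\|\tilde{\Phi}[k](e_l-e_{l'})\|_1\le C\lambda^k$ with $\lambda<1$, rather than as convergence of $\tilde{\Phi}[k]$; and (ii) apply your window-positivity argument once more to the $z$-iteration to obtain a \emph{uniform} lower bound $z_j[k]\ge\beta_0|\mathcal{V}|>0$ for every real node $v_j$ and all $k$ beyond one window. Then
\[
\mu_j[k]=\frac{\tilde{c}_j[k]\sum_{v_\ell\in\mathcal{V}} y_0(\ell)+O(\lambda^k)}{\tilde{c}_j[k]\,|\mathcal{V}|+O(\lambda^k)}\longrightarrow\frac{\sum_{v_\ell\in\mathcal{V}} y_0(\ell)}{|\mathcal{V}|},
\]
because the coalescence error is negligible relative to $\tilde{c}_j[k]\ge\beta_0-O(\lambda^k)$. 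Without (ii), the bound on the column spread alone does not rule out $z_j[k]$ decaying at the same rate as the error, and the division in the last step is not justified.
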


\subsection{Finite-time asynchronous ratio consensus}

As it is the case for the synchronous distributed algorithm (see \S~\ref{sec:distributed_synchronous}), the consensus algorithm should terminate before the next optimization step and in a distributed fashion. 
In what follows, we propose a distributed termination protocol for the asynchronous case, based on the one used for the synchronous case. 
{We believe, that this is the first termination algorithm that can handle delays and perform asynchronous consensus.}

The proposed termination algorithm has the same principles as before~\cite{cady_finite-time_2015}. 
{However, in order to make the ideas put forth in~\cite{cady_finite-time_2015} applicable into the asynchronous case we expand upon them using several innovations. More concretely, when compared to the synchronous case the aforementioned innovations are outlined below}:
\begin{list5}
    \item {
        The $\min$ and $\max-$consensus algorithm converge in $(1+\bar{\tau})D$ steps~\cite{giannini_convergence_2013}.
    }
    \item {
        Every $(1+\bar{\tau})D$ steps each  node  checks  whether $|M_j[k]-m_j[k]|< \epsilon$. 
        If  this is the case, then the ratios for all nodes are close to the asymptotic value and it stops iterating. Otherwise, $m_j[k]$ and $M_j[k]$ are reinitialized to $\mu_{j}[k]$.
    }
\end{list5}

The algorithm is described in Algorithm~\ref{Algorithm_finitetimeRCwDelays} or digraphs; note that this implies it also holds for \textit{undirected} graphs as well, that we consider in this case.
\begin{algorithm}
\caption{Distributed Finite-Time Termination for Asynchronous Ratio Consensus}
\begin{algorithmic}
\STATE \textbf{Input:} A strongly connected digraph $\mathcal{G}=(\mathcal{V}, \mathcal{E})$. Each node $v_{j} \in \mathcal{V}$ knows its out-degree $\mathcal{N}_j^{+}$.
Initial values are $y_{j}[0]=\ell_j+u_{j}$ and $z_{j}[0]=\pi_j^{\max}$, and tolerance $\epsilon$.
\STATE \textbf{set} $M_j[0]=+\infty, ~ m_j[0]=-\infty,  {\rm flag}_j[0]=0, \mu_{j}= \frac{y_{j}[0]}{z_{j}[0]}$
\STATE \textbf{set} $p_{lj}=\frac{1}{1+d_j^{\rm out}}$, $\forall~v_l\in \mathcal{N}_j^{+} \cup \{v_{j}\}$ (zero otherwise)
\FOR{$k \geq 0$}
\WHILE{${\rm flag}_j[k]=0$}
\IF{$k \mod (1+\bar{\tau})D =0$ and $k\neq 0$}
\IF{$|M_j[k]-m_j[k]|< \epsilon$}
\STATE \textbf{set} ${\rm flag}_j[k] = 1$
\ENDIF
\STATE \textbf{set} $M_j[k]=m_j[k]=\mu_{j}[k] = \frac{y_{j}[k]}{z_{j}[k]}$
\ENDIF
\STATE  \textbf{broadcast} to all $v_l \in \mathcal{N}_j^{+}$: \newline $p_{lj}y_{j}[k]$, $p_{lj}z_{j}[k]$, $M_j[k]$, $m_j[k]$
\STATE  \textbf{receive} from all $v_{i} \in \mathcal{N}_j^{-}[k]$: \newline $p_{ji}y_{i}[k]$, $p_{ji}z_i[k]$, $M_i[k]$, $m_i[k]$
\STATE  \textbf{compute} \\
\STATE $y_{j}[k]\hspace{-0.1cm}\leftarrow \hspace{-0.1cm} p_{jj}y_{j}[k] + \sum_{v_{i} \in \mathcal{N}^{-}_j} \sum_{r=0}^{\bar{\tau} } y_{ji}[k-r]I_{k-r,ji}[r]$
\STATE $z_{j}[k]\hspace{-0.1cm}\leftarrow \hspace{-0.1cm} p_{jj}z_{j}[k] + \sum_{v_{i} \in \mathcal{N}^{-}_j} \sum_{r=0}^{\bar{\tau} } z_{ji}[k-r]I_{k-r,ji}[r]$
\STATE $M_j[k]\hspace{-0.1cm}\leftarrow \hspace{-0.1cm} \max_{v_{i}\in \mathcal{N}_j^{-} \cup \{v_{j}\}}\{ M_i[t_j(k)+\theta_{ij}(k)] \}$
\STATE $m_j[k]\hspace{-0.1cm}\leftarrow \hspace{-0.1cm} \max_{v_{i}\in \mathcal{N}_j^{-} \cup \{v_{j}\}}\{ m_i[t_j(k)+\theta_{ij}(k)] \}$
\ENDWHILE
\ENDFOR
\end{algorithmic}
\label{Algorithm_finitetimeRCwDelays}
\end{algorithm}

\begin{theorem}\label{thm:finitercdelays}
Algorithm~\ref{Algorithm_finitetimeRCwDelays} converges in finite time.
\end{theorem}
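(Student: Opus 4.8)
The plan is to combine two convergence facts: the \emph{asymptotic} convergence of the robustified ratio consensus (Lemma~\ref{our_lemma}) and the \emph{finite-time} convergence of the asynchronous $\max$/$\min$-consensus over a window of length $(1+\bar\tau)D$. Throughout, write $\mu^\star := \frac{\sum_{v_\ell\in\mathcal V} y_0(\ell)}{|\mathcal V|}$ for the common limit, and define the instantaneous spread $\delta[k] := \max_{v_i\in\mathcal V}\mu_i[k] - \min_{v_i\in\mathcal V}\mu_i[k]$. The checkpoints of the algorithm are the indices $c_n := n(1+\bar\tau)D$, $n\geq 1$, at which the termination test is evaluated and the auxiliary states are reset.

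First I would establish that the spread vanishes. By Lemma~\ref{our_lemma}, under Assumption~\ref{assumption:2} and the bounded-delay condition $\tau_{ji}[k]\le\bar\tau$, the ratios satisfy $\mu_j[k]\to\mu^\star$ for every $v_j\in\mathcal V$. Since $\mathcal V$ is finite, this gives $\delta[k]\to 0$ as $k\to\infty$, so there exists a finite index $K$ such that $\delta[k]<\epsilon$ for all $k\ge K$. In particular, taking the first checkpoint with $c_n\ge K$, the snapshot $\{\mu_i[c_n]\}_{v_i\in\mathcal V}$ formed at the reinitialization step $c_n$ has spread $\delta[c_n]<\epsilon$.

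Next I would exploit the finite-time behaviour of the auxiliary iterations. At checkpoint $c_n$ each node resets $M_j[c_n]=m_j[c_n]=\mu_j[c_n]$, and over the window $(c_n, c_{n+1}]$ the states $M_j$ and $m_j$ evolve by asynchronous $\max$- and $\min$-consensus on these frozen values. By the convergence result for asynchronous extremum consensus~\cite{giannini_convergence_2013}, after $(1+\bar\tau)D$ steps every node holds the global extremes of the snapshot; that is, $M_j[c_{n+1}]=\max_{v_i\in\mathcal V}\mu_i[c_n]$ and $m_j[c_{n+1}]=\min_{v_i\in\mathcal V}\mu_i[c_n]$ for every $v_j\in\mathcal V$. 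Hence $|M_j[c_{n+1}]-m_j[c_{n+1}]|=\delta[c_n]<\epsilon$ holds identically at all nodes, the termination test succeeds, and every node sets ${\rm flag}_j=1$ at the common checkpoint $c_{n+1}$. As $c_{n+1}$ is finite, the algorithm terminates in finite time, and it does so simultaneously.

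The main obstacle is the second step: one must argue that $\max$/$\min$-consensus recovers the global extrema of the \emph{frozen} snapshot within a single window of $(1+\bar\tau)D$ steps despite the asynchrony and bounded message delays, and that this window length is precisely the bound guaranteed for asynchronous extremum consensus. This requires the reinitialization at $c_n$ and the window length to be matched to the delay bound $\bar\tau$, so that no stale pre-reset values survive long enough to corrupt the extremum estimate. A secondary point worth flagging is that $\delta[c_n]$ need not decrease monotonically in $n$; the argument therefore does not claim that the test improves each round, only that beyond $K$ \emph{every} checkpoint snapshot already has spread below $\epsilon$, which is enough to force termination.
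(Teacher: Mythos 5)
Your proposal is correct and follows essentially the same route as the paper's proof: both combine the asymptotic convergence of the ratios from Lemma~\ref{our_lemma} with the fact that asynchronous $\max$/$\min$-consensus settles within one $(1+\bar{\tau})D$ window, so that the values compared at a checkpoint are the extrema of the previously frozen snapshot, whose spread eventually falls below $\epsilon$. Your write-up is in fact somewhat more explicit than the paper's (which phrases the final step as convergence of the checkpoint subsequences of $M$ and $m$ to a common limit), and your closing remark on the non-monotonicity of the spread matches a caveat the paper only raises later in a remark on the simulations.
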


\begin{proof}
    From Lemma~\ref{our_lemma}, we know that $\lim_{k\rightarrow \infty} \mu_{j}[k]=(\sum_{v_\ell \in \mathcal{V}} y_0(\ell))/\mathcal{V}$, for all $v_{j} \in \mathcal{V}$. 
    Therefore, it follows that
    \begin{align}
        \lim_{k\rightarrow \infty}\left|\max_{v_{j} \in \mathcal{V}}\mu_{j}[k]-\frac{\sum_{v_\ell \in \mathcal{V}} y_0(\ell)}{|\mathcal{V}|} \right|=0,  
    \end{align}
    which means that essentially $\lim_{k\rightarrow \infty} M[k]=\frac{\sum_{v_\ell \in \mathcal{V}} y_0(\ell)}{|\mathcal{V}|}$.
    Additionally, $k_0$ exists, such that for all $k \geq k_0$, we have
    \begin{align}
        \left|\mu_{j}[k]-\frac{\sum_{v_\ell \in \mathcal{V}} y_0(\ell)}{|\mathcal{V}|} \right|<\epsilon, ~\forall~ v_{j} \in \mathcal{V}.
    \end{align}
    Therefore, it follows that 
    \begin{align}
        \left|\max_{v_{j} \in \mathcal{V}}\mu_{j}[k]-\frac{\sum_{v_\ell \in \mathcal{V}} y_0(\ell)}{|\mathcal{V}|} \right|<\epsilon,
    \end{align}
    In turn, this implies that there exists $k_0$, such that for all $k\geq k_0$, 
    \begin{align}
        \left|M[k]-\frac{\sum_{v_\ell \in \mathcal{V}} y_0(\ell)}{|\mathcal{V}|} \right|<\epsilon.
    \end{align}
    Similar arguments hold for $m[k]$. 
    Since $\{M[r(1+\bar{\tau})D]\}_{r\in\mathbb{N}}$ and $\{m[r(1+\bar{\tau})D]\}_{r\in\mathbb{N}}$ are sub-sequences of sequences that converge (due to the fact that asynchronous $\max-$ consensus converges within $(1+\bar{\tau})D$ steps), then they converge to the same limit. 
    Therefore, there exists $r_0$, such that for all $r\geq r_0$, $|M[r(1+\bar{\tau})D]-m[r(1+\bar{\tau})D]| < \epsilon$.
\end{proof}

\begin{remark}\label{remark:2}
We stress that similar results were proposed in \cite{prakash_distributed_2020} for guaranteeing convergence to approximate average consensus in a finite number of steps, allowing for time-varying bounded delays in information transmission and reception between agents. Nevertheless, apart from the fact that our results are obtained for an optimization problem for CPU scheduling, there are some additional differences:
\begin{list5}
\item we use the consensus algorithm in the concept of asynchronous
operation, rather than synchronous operation with delays, despite the
fact that the mathematical analysis relies on similar concepts;
\item the window used for updating the min/max value of the agents is
different (for us this is $(1+\bar{\tau})D$ while for them is
$(1+\bar{\tau})D +\bar{\tau}$), and
\item we show via simulation that the lemmas (and, hence, the proofs) in \cite{prakash_distributed_2020}
are incorrect (see also the discussion in Section~\ref{sec:discussion}).
\end{list5}
\end{remark}

\section{Simulations}\label{sec:simulations}

To validate our scheme, we divide our evaluation into three separate segments. 
The first focuses on simulating the performance using a simple, easy to understand, network of five nodes. 
The second one presents a thorough quantitative evaluation using simulations for various randomly generated graphs and latencies. 
The last one, provides a large scale evaluation with network graphs and simulation parameters that would be applicable in large scale data centers having thousands of nodes.
To our knowledge this is the first work that tackles the problem at this scale in this setting while also providing a thorough evaluation and theoretical guarantees.
All experiments are computed on a workstation using an AMD 3970X CPU with $32$ cores at $4.0$GHz, $256$ GB $3600$ MHz DDR4 RAM, and Matlab R2022b (build 9.13.0.2080170)\footnote{To foster reproducibility both code and datasets used for our numerical evaluation are publicly available at: \url{https://github.com/andylamp/federated-capacity-consensus}.}.

\subsection{Evaluation using a small network}

The digraph is comprised out of $|\mathcal{V}|=5$ vertices and has a diameter equal to $D=4$; for helping exposition the exact digraph is shown in Fig.~\ref{fig:example_digraph}. 

\begin{figure}[ht]
    \centering
    \includegraphics[width=0.45\columnwidth]{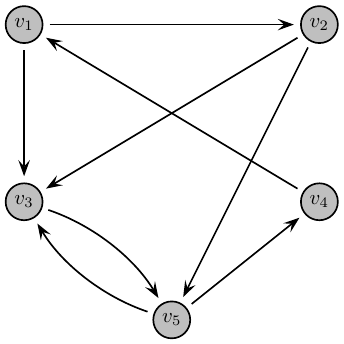}
    \caption{The strongly connected digraph network comprised out of five nodes which is used to evaluate the validity of our results though an indicative, small-scale example.}
    \label{fig:example_digraph}
\end{figure}
All node are set with equal capacities and the workload vector $\rho$ is set to $\rho=[1, 2, 3, 4, 5]$ in all runs.
Further, we set the convergence threshold for the absolute difference of the quantity $|M_j[k]-m_j[k]| < \epsilon$ to $\epsilon=10^{-5}$.
Throughout our experiments, at each interval the workload to be scheduled is generated for each node independently. 
Concretely, each randomly chooses a job cost from a uniform distribution bounded between an acceptable cost range which is provided upon initialisation.
These values are then concatenated to generate a workload vector which has a value between that range for each of the nodes.

Then in order to study the impact of increased delay in the number of total iterations required, we evaluate our proposed algorithm when using $\bar{\tau}=[4, 9]$.
We start by showing the results for $\bar{\tau}=4$ in~Fig.~\ref{fig:async-basic-converge-delay-5}. 
In this figure, we observe that converge happens after $120$ iterations which is $4{(1+\bar{\tau})}D$, meaning that in total four rounds are required. 

\begin{figure}[ht]
    \centering
    \includegraphics[width=\columnwidth]{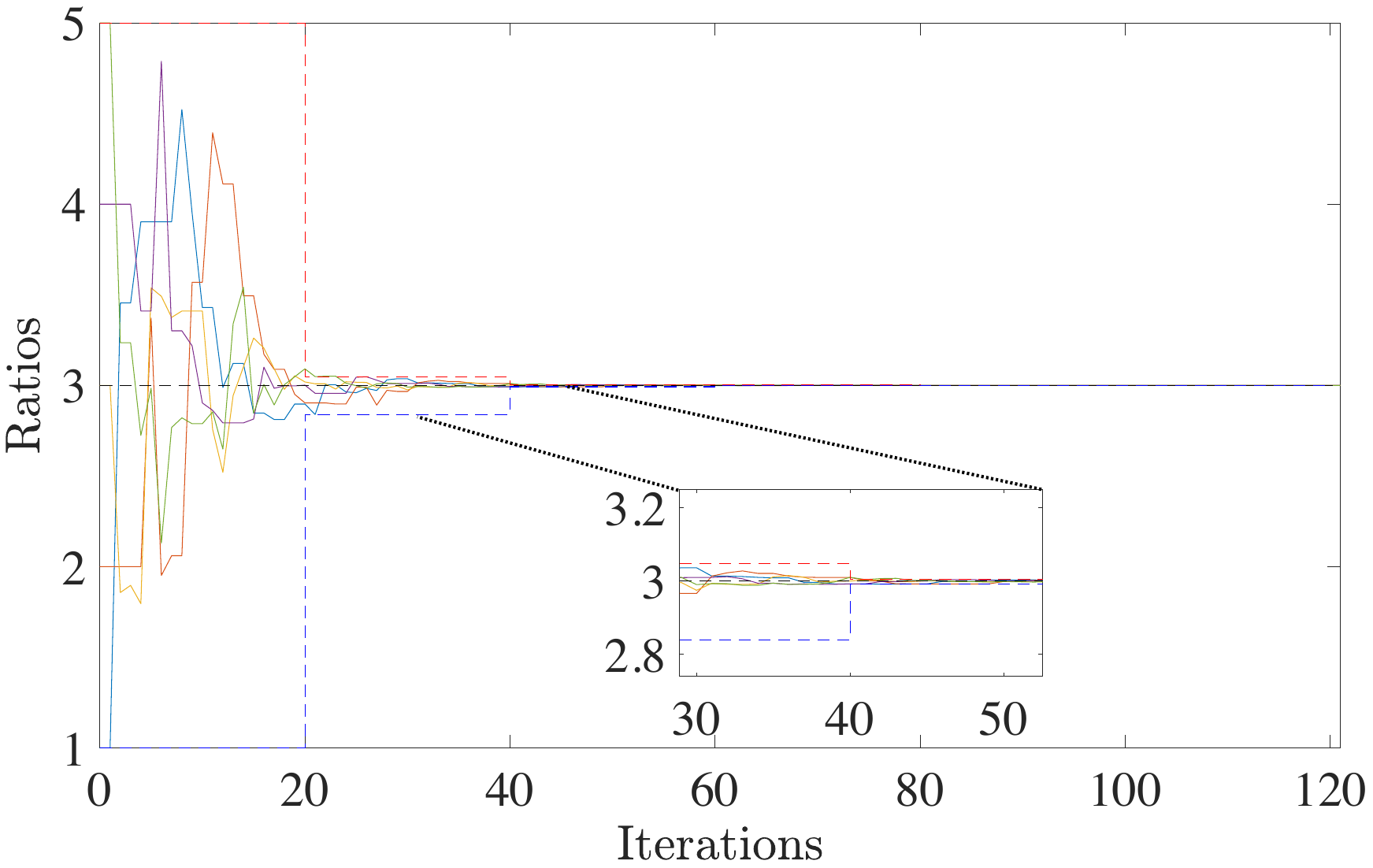}
    \caption{
    A simple example of a network of five nodes as described in~Fig.~\ref{fig:example_digraph} when the links experience time-varying delays with maximum delay ($\bar{\tau}$) of $4$. 
    The figure shows the evolution of the converge ratios across all nodes along with the $\min-$consensus (dashed blue) and the $\max-$consensus (dashed red).}
    \label{fig:async-basic-converge-delay-5}
\end{figure}

Following, we shift our attention to~\Cref{fig:async-basic-converge-delay-10} in which we show the results of the same experiment when using a delay value of $\bar{\tau}=9$.
Concretely, we see that the increased delay has an impact on the total iterations required to converge increasing them by a factor of about $\approx 1.6$ when compared to the previous experiment.

\begin{figure}[ht]
    \centering
    \includegraphics[width=\columnwidth]{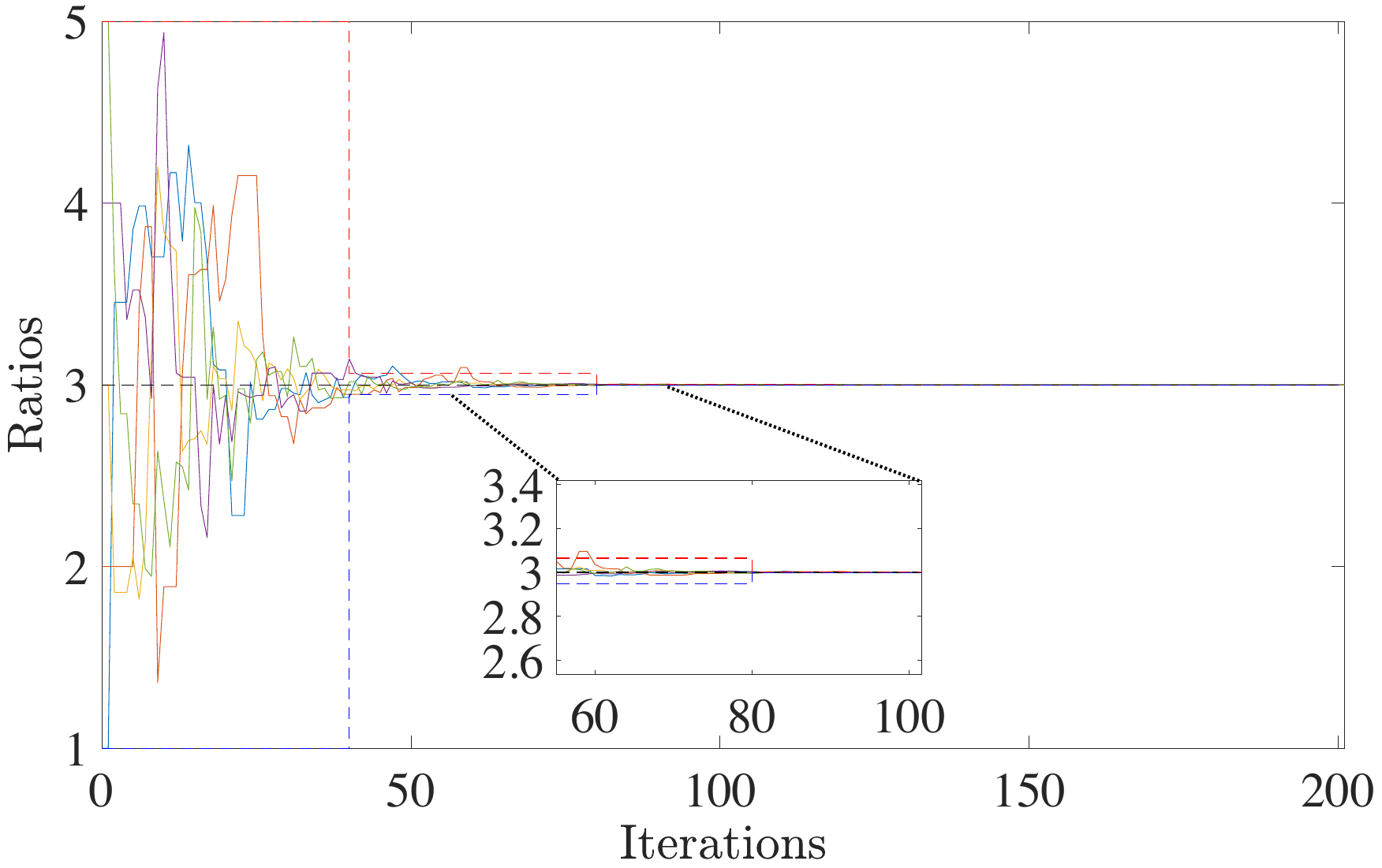}
    \caption{
    Converge ratios, when using the network of five nodes as described in Fig.~\ref{fig:example_digraph} when the links experience time-varying delays with maximum delay ($\bar{\tau}$) of $9$. 
    The $\min-$consensus and $\max-$consensus are depicted by the dashed blue and red lines respectively. 
    }
    \label{fig:async-basic-converge-delay-10}
\end{figure}

We see that both figures converge in multiples of $(1+\bar{\tau})D$ which requires six rounds when having $\bar{\tau}=5$ and four rounds when using $\bar{\tau}=10$. 
Notably, as delay grows the round size increases linearly assuming we operate on the same graph (hence the diameter $D$ remains the same). 
Indeed, the round size for $\bar{\tau}=5$ is $20$ iterations whereas in the case of $\bar{\tau}=10$ the round size is $40$ iterations.
Quantitatively speaking, we observe that as the round size increases the number of rounds required to converge decreases.
We conjecture that this can be attributed to the fact that as the round size increases the information has an elongated iteration window to propagate throughout the graph which in turns helps to converge with fewer rounds.
However, since the results are simulated centrally even if the aggregated simulation cost is large, the amortised cost (e.g. the actual computation that would be required per node) is practically very low - even in the presence of large delays.

\begin{remark}
Note that there are some nodes $v_j\in\mathcal{V}$ for which the state $\mu_j[k']$ is larger than the maximum $M(k)$, where $k'>k$ and $k\mod D =0$ (note that this constitutes a counterexample to Lemma IV.2 in \cite{khatana_gradient-consensus_2020}). 
Despite the fact that the ratio is not monotonically decreasing (due to the nonlinearity imposed by the ratio), the main properties that guarantee the convergence of this algorithm is that the ratio is guaranteed to converge and the $\max$-consensus algorithm converges within ${(1+\bar{\tau})}D$ steps.  
\end{remark}

\subsection{Evaluation using varying delays and network sizes}

The previous example is indicative on how our scheme performs in a tangible, small-scale scenario. 
{In this section, we evaluate the performance of our proposed algorithm across a broader range of parameters reflecting realistic deployments, as such our generated topologies attempt to replicate ones that would be in real data-centers.}
To that end, we create a test suite monitoring both convergence and actual simulation execution time for varying graph sizes and delays.
check this again: Concretely, for a given amount of trials, graph size dictated by $|\mathcal{V}|$, and a range of  delays upper bounds we create a random graph for different unique pairs $\langle |\mathcal{V}|, \; \bar{\tau} \rangle$.
The values considered for graph sizes and delays upper bounds are $|\mathcal{V}|=[20, 50, 100, 200, 300, 600]$ and $\bar{\tau}=[1, 5, 10, 15, 20, 30]$, respectively, which result in the evaluation of $36$ unique $\langle |\mathcal{V}|, \; \bar{\tau} \rangle$ pairs.
More specifically, for each unique $\langle |\mathcal{V}|, \; \bar{\tau} \rangle$ pair we perform $10$ trials and average the results for each pair. 
We also note, that throughout our experiments, . More concretely, as long as we are able to generate a connected random graph, all trial instances converge within the maximum iteration limit set; this value is set to $4000$ iterations across all runs.
{Additionally, while the randomly generated network topologies are guaranteed to be strongly connected and of relatively low diameter they are not necessarily assured to be fat-tree and/or spine-leaf compliant topologies.}
We begin by presenting the number of iterations required to converge, on average, across $10$ runs for each $\langle |\mathcal{V}|, \; \bar{\tau} \rangle$ pair;  results are shown in~Fig.~\ref{fig:converge-iterations}.

\begin{figure}[ht]
    \centering
    \includegraphics[width=.98\columnwidth]{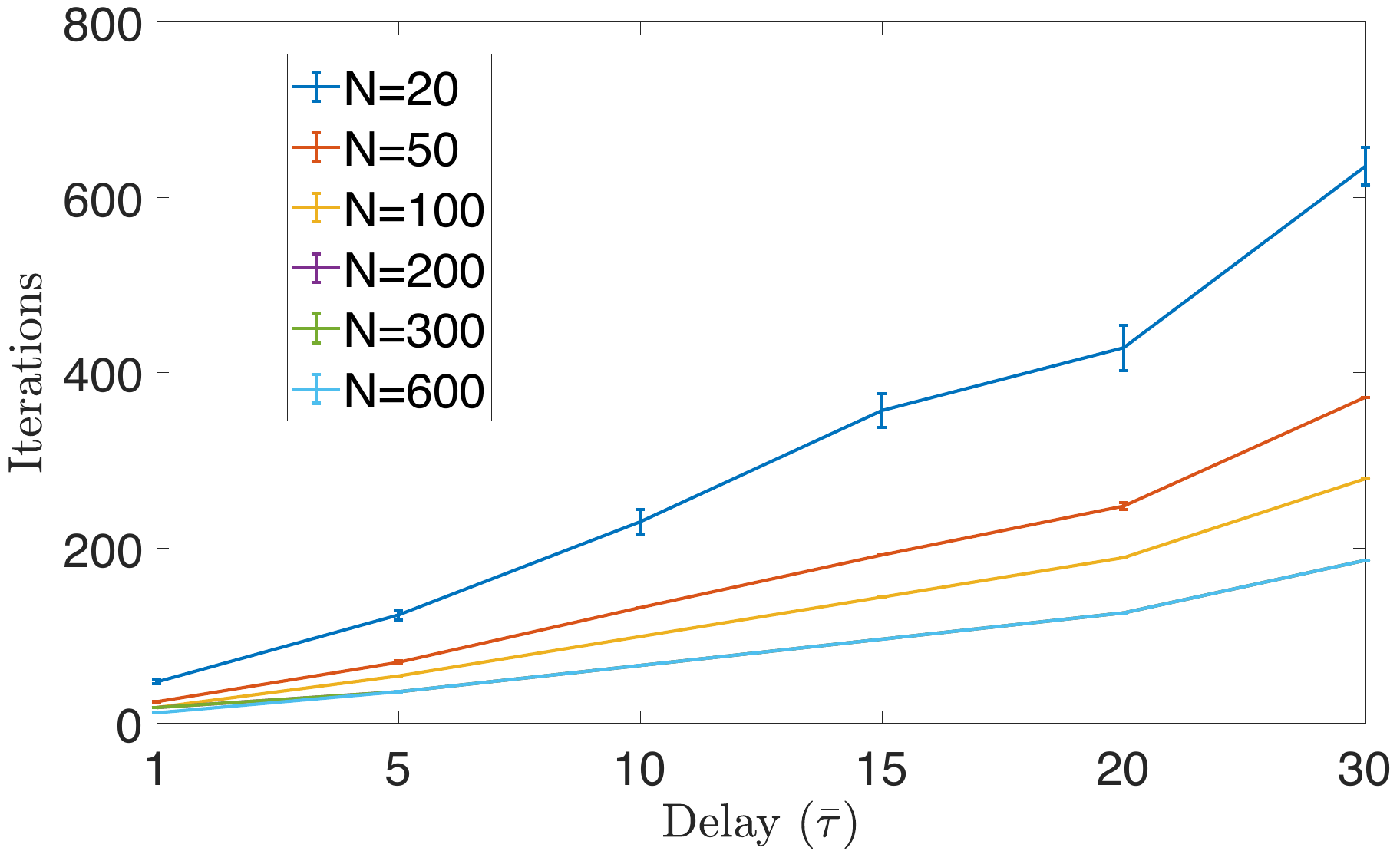}
    \caption{
    Total number of iterations required to converge for each unique $\langle |\mathcal{V}|, \; \bar{\tau} \rangle$ pair averaged across $10$ trials.
    The $x$-axis shows the different delays ($\bar{\tau}$) while each line represents the number of nodes ($|\mathcal{V}|$) that exist within each graph. 
    }
    \label{fig:converge-iterations}
\end{figure}

Fig.~\ref{fig:converge-iterations} indicates that smaller networks require more iterations than larger ones to converge, which are still multiples of $(1+\bar{\tau})D$.
At first glance this observation might seem as counter-intuitive, however, we conjecture that such behaviour is encountered because the round size for smaller networks is smaller thus the system has fewer iterations to reach a steady state within each round.
Indeed, similarly to the delay, recall that each round length is dictated by $(1+\bar{\tau})D$; thus, fixing the delay $\bar{\tau}$ and increasing the diameter $D$---as is the case when the graph network grows---results in linear inflation of the round size. 
Notably, even if the round size increases this does not mean that the execution time is less.
In fact it is quite the opposite since the total simulation time is higher as the network size increases. 
However, the extrapolated actual cost per node is much less. This is because, the workload for each can be parallelized and is asynchronous.

\begin{figure}[ht]
    \centering
    \begin{subfigure}{.99\linewidth}
        \centering
        \includegraphics[width=.98\columnwidth]{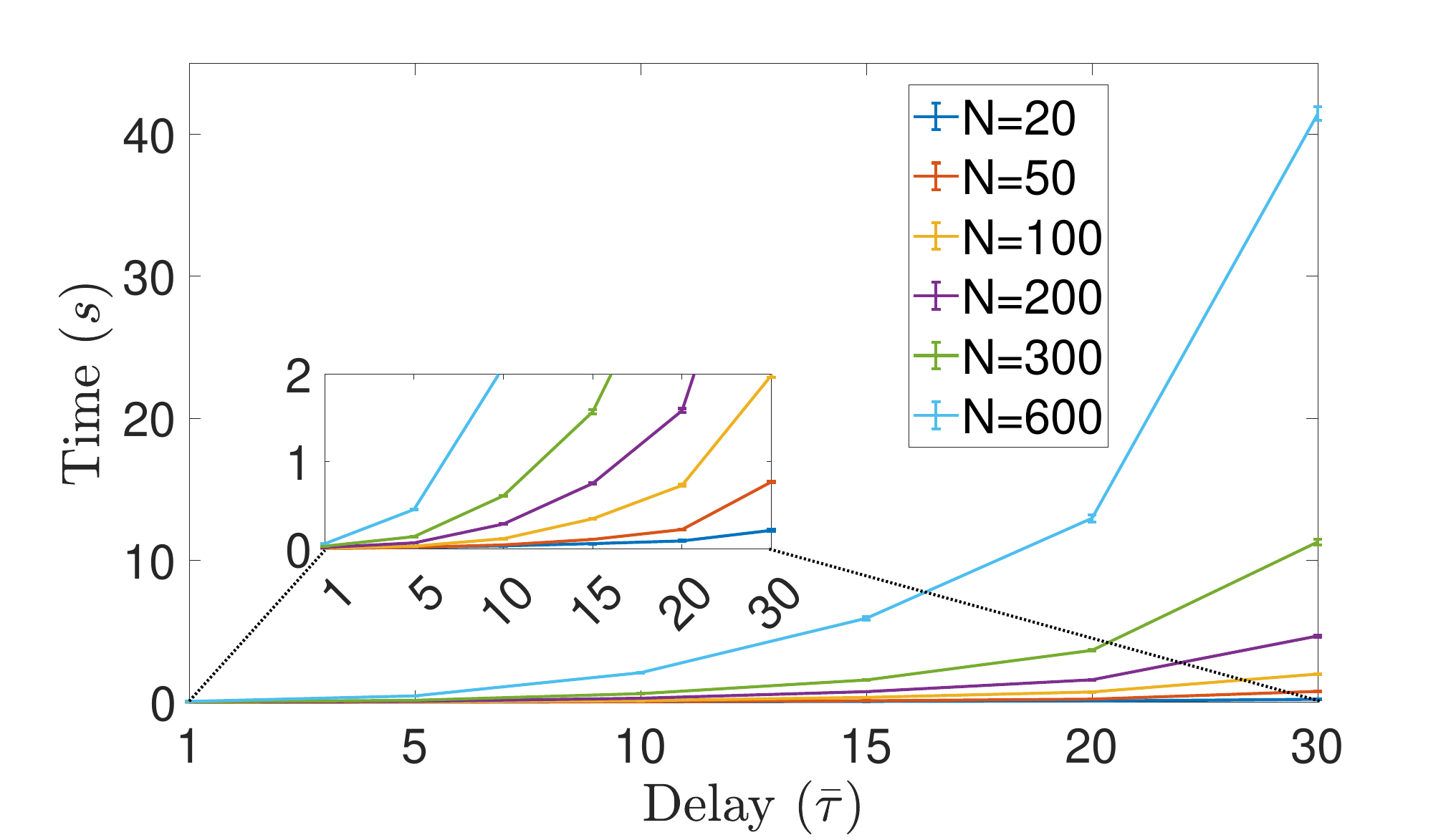}  
        \caption{Total simulation time.}
        \label{fig:converge_sim_time}
    \end{subfigure}
    \begin{subfigure}{.99\linewidth}
        \centering
        \includegraphics[width=.98\columnwidth]{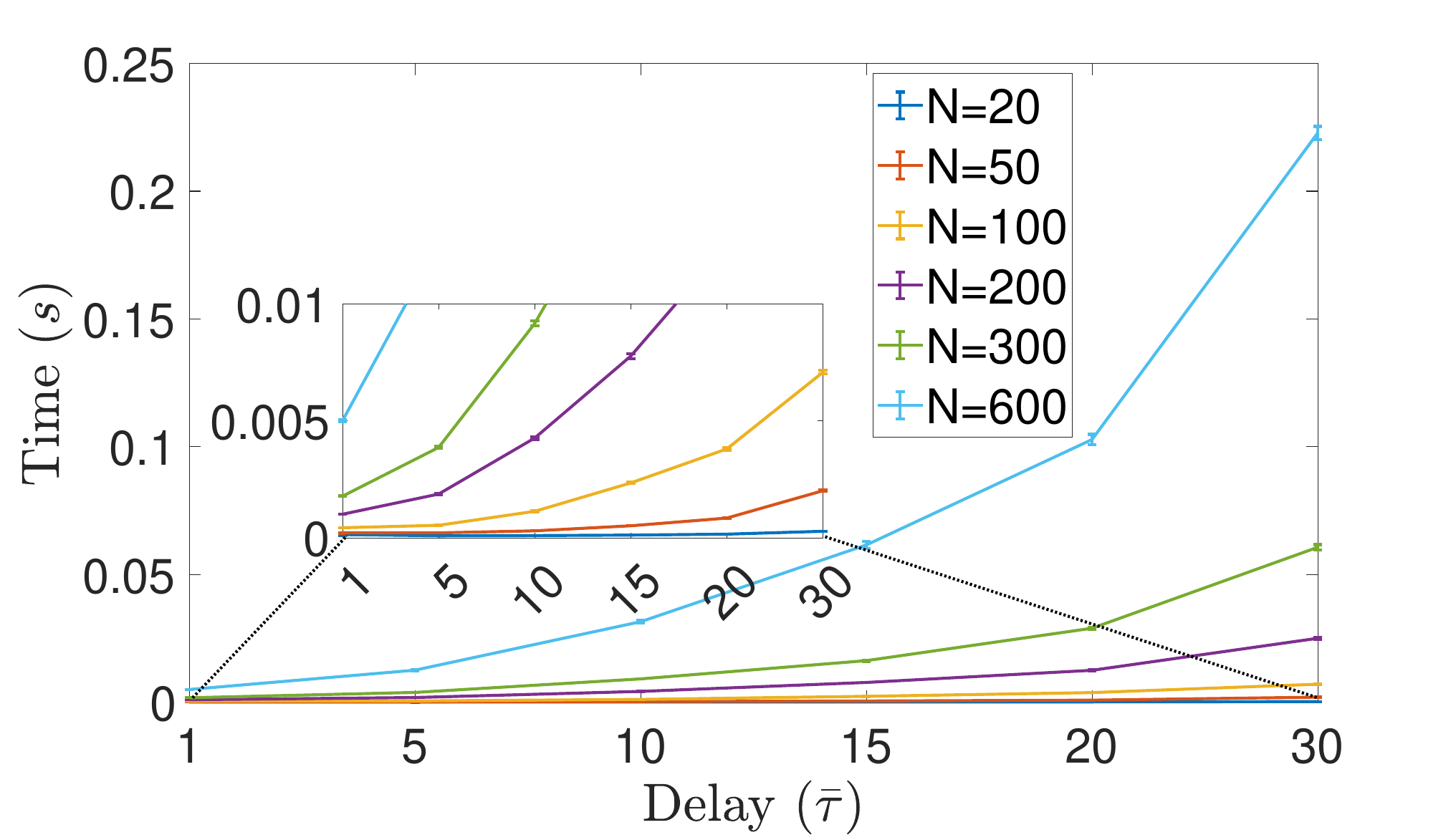}  
        \caption{Iteration time.}
        \label{fig:avg_iter_time}
    \end{subfigure}
    \caption{Top (Fig.~\ref{fig:converge_sim_time}), we present the total execution time required to converge for each unique $\langle |\mathcal{V}|, \; \bar{\tau} \rangle$ pair averaged across $10$ trials.
    The $x$-axis shows the different delays upper bounds ($\bar{\tau}$) while each line represents the number of nodes ($|\mathcal{V}|$) that exist within each graph.
    {Bottom} (Fig.~\ref{fig:avg_iter_time}), we show the average iteration time for each of the different configurations.
    }
    \label{fig:total-execution-time}
\end{figure}

Fig.~\ref{fig:total-execution-time} shows the average execution time required to converge and the average iteration execution time for the same experiments discussed previously.
As we can see from Fig.~\ref{fig:total-execution-time}, the execution time scales exponentially as both delay and graph size increase.
More importantly, this graph shows in practice that larger graphs take more time to converge than smaller ones given the same delay even if the actual rounds to converge are less as graph size increases.
This is because, as we noted previously, even if the iterations are fewer each iteration within a larger graph takes significantly more time to complete in practice.
However, as a general trend we observe that regardless of the network size used in our experiments, if the delay remains below $\bar{\tau}=10$, then it converges relatively quickly. Conversely, it seems that for delays greater than $\bar{\tau}=15$ then the time to converge scales exponentially.

\subsection{Data center scale evaluation}

Previous examples evaluate the performance of the algorithm in practical small-scale deployment. However, these experiments do not capture the scale of modern data centers which contain thousands of server machines.
To that end, to evaluate the data center scalability of our scheme we perform experiments on thousands of nodes.
We assume that in data centers most nodes are few hops away from each other, so we use graphs with a small diameter~\cite{singla_high_2014}.
Further, we assume that the latency within data centers is near zero as shown before in order to satisfy the needs of modern workloads~\cite{guo_pingmesh_2015,alizadeh_less_2012}.
To sum up, in order to provide a realistic data center scale representation, we create a simulation configuration that scales to thousands of nodes; considers graphs of a small diameter; and finally assumes low, even if variable, network delays upper bounds. 
Concretely, the values considered for the graph sizes and delays upper bounds are $|\mathcal{V}|=[20, 200, 500, 1000, 5000, 10000]$ and $\bar{\tau}=[1, 2, 3, 4 5]$ respectively; which result in the evaluation of $30$ unique $\langle |\mathcal{V}|, \; \bar{\tau} \rangle$ pairs.
We note, however, that in order for modern data centers to maintain very low network communication delays, it is desirable to have just a couple of hops between nodes and, hence, we consider graphs with small diameter~\cite{popa_cost_2010, singla_high_2014}.
As previously, for each unique $\langle |\mathcal{V}|, \; \bar{\tau} \rangle$ pair we perform $5$ trials and average the results for each pair. 

\begin{figure}
    \centering
    \includegraphics[width=\columnwidth]{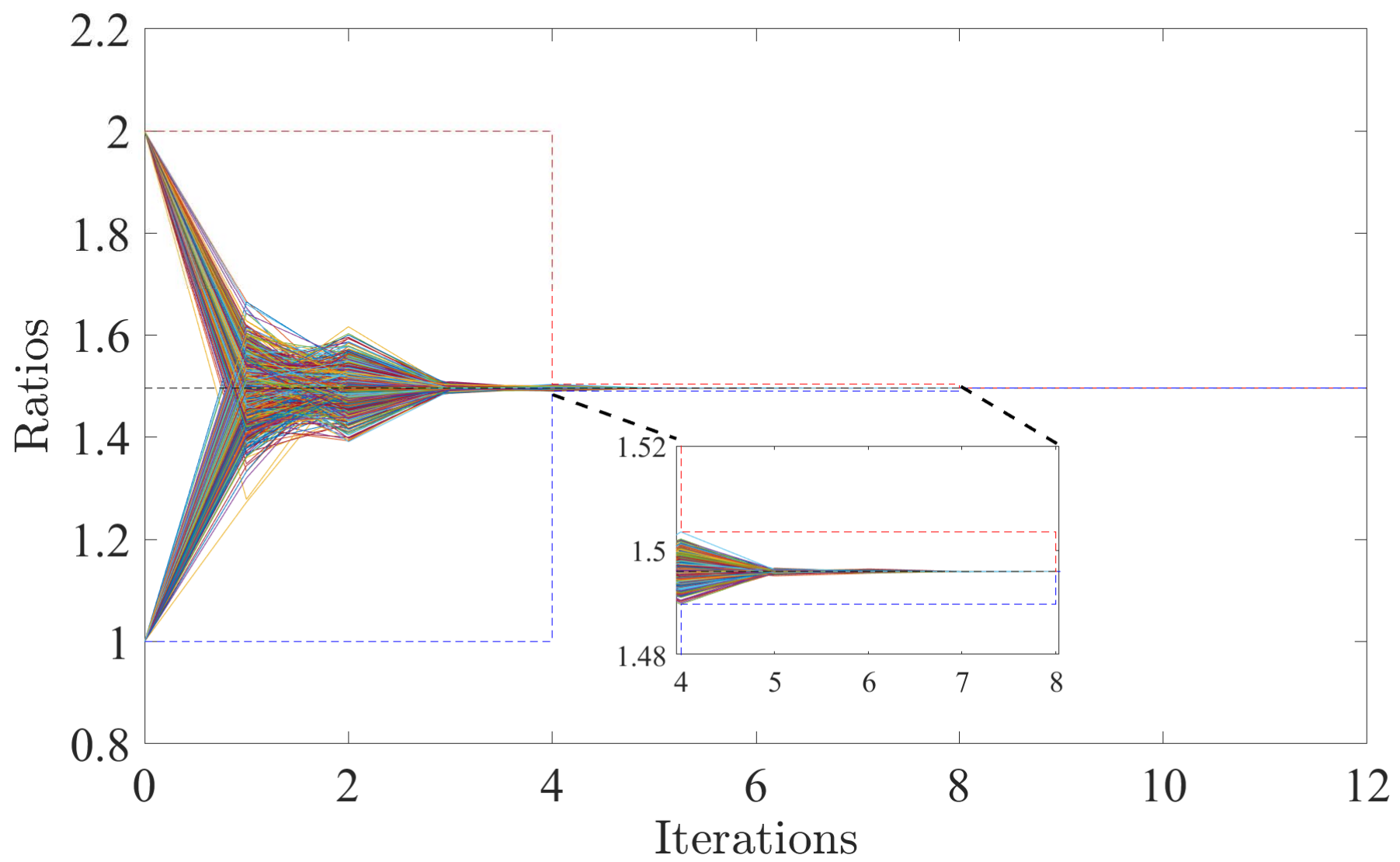}
    \caption{Example run of a network comprised of $1000$ nodes having a diameter equal to $2$ and using a delay upper bound $\bar{\tau}$ of $1$. The network converges to the optimal solution in very few iterations.}
    \label{fig:1k_nodes_run_delay_1}
\end{figure}

Fig.~\ref{fig:1k_nodes_run_delay_1} illustrates the results of an example run of a network size of $1000$ and a delay $\bar{\tau}=1$.
We can see that our scheme is able to converge to the optimal solution in very few iterations. This is attributed to the diameter of the graph which was equal to $D=2$ and to low delays ($\bar{\tau}=1$).

In the next data center scale experiment we vary the number of nodes from $20$ to data center scale of $10000$. We also vary the upper bound on the delay $\bar{\tau}$. Results are shown in~Fig.~\ref{fig:10k_delay_scaling} and~Fig.~\ref{fig:10k_execution_time}. 
Fig.~\ref{fig:10k_delay_scaling} shows the converge scaling with respect to the iterations required as the delays upper bound and network size grow.
Fig.~\ref{fig:10k_execution_time} shows the average total simulation time and per iteration time required per each network size and delays upper bound.
Note, that the simulation indicates the \emph{aggregated times} required to complete each round since for the context of this work we simulate our scheme centrally for all networks.
In practice, in a real system, the actual execution cost per node would be much less since the workload would be executed asynchronously and concurrently.

\begin{figure}
    \centering
    \includegraphics[width=.98\columnwidth]{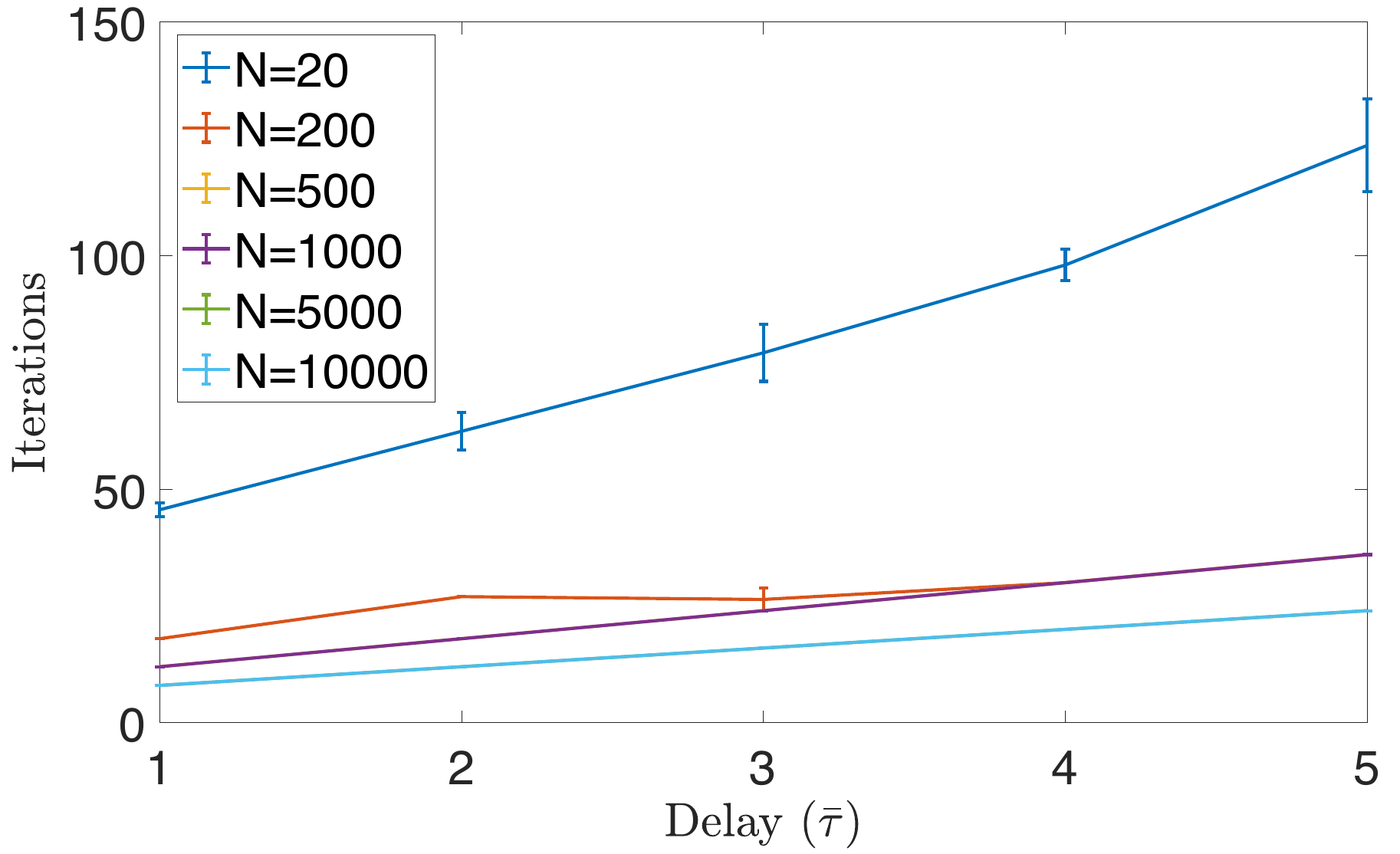}
    \caption{Mean iterations to converge for different network sizes and delay values. Delay plays a larger role in smaller networks (< 200 nodes) whereas as network size increases the delay impact is lower.}
    \label{fig:10k_delay_scaling}
\end{figure}

The same trend can be seen in the converge statistics in Fig.~\ref{fig:10k_converge_stats_delay_1} and Fig.~\ref{fig:10k_converge_stats_delay_5}. We define as the ``min'' the iteration in which the first node successfully converges and the ``max'' the iteration where the last node converges.
Note, that mean is the ``average'' converge iteration for all nodes and the converge ``window'' is the difference between the ``max'' and ``min''.
\begin{figure}
    \centering
    \begin{subfigure}{.49\linewidth}
        \centering
        \includegraphics[scale=0.29]{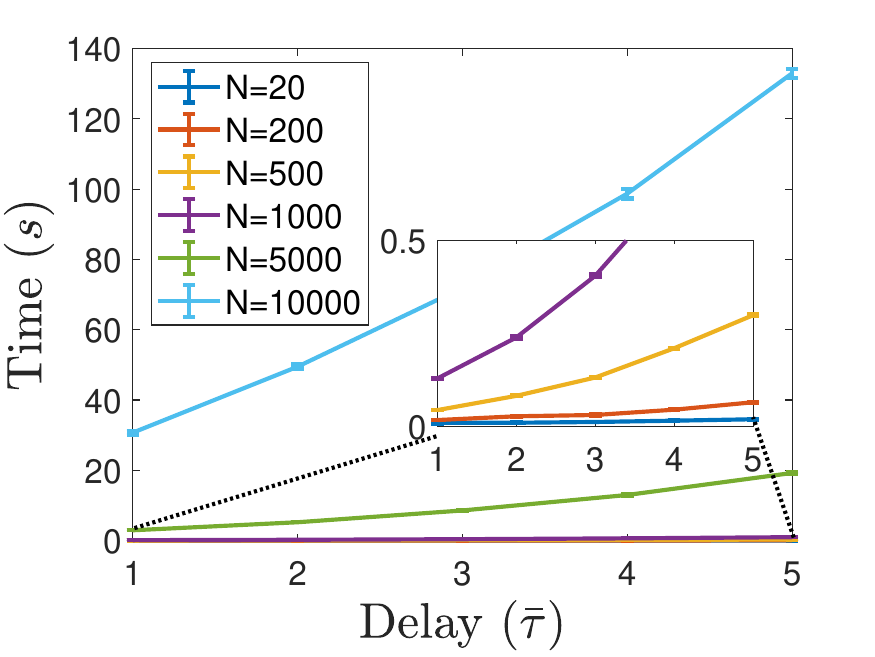}  
        \caption{Total simulation time.}
        \label{fig:10k_converge_sim_time}
    \end{subfigure}
    \begin{subfigure}{.49\linewidth}
        \centering
        \includegraphics[scale=0.29]{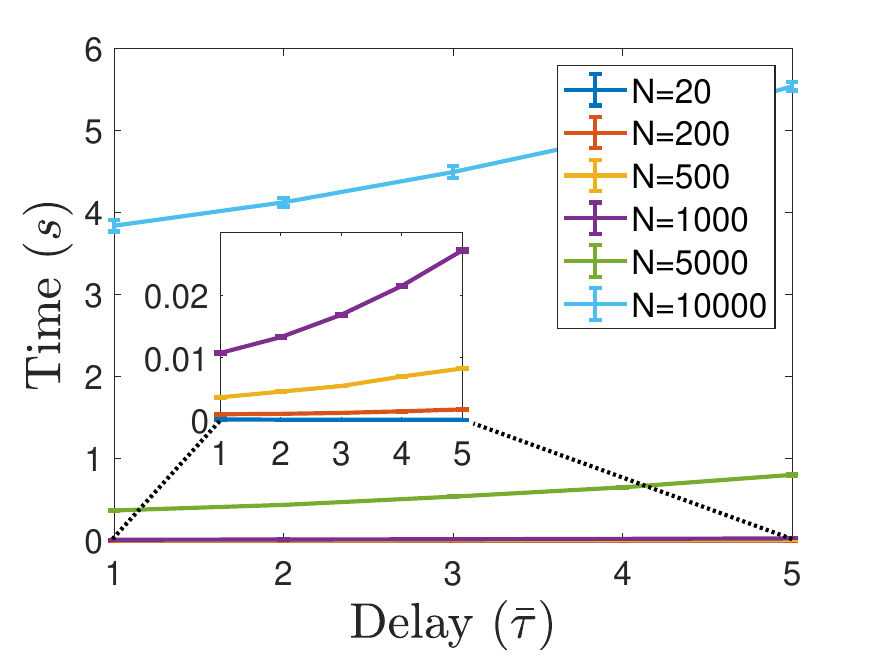}  
        \caption{Iteration time.}
        \label{fig:10k_avg_iter_time}
    \end{subfigure}
    \caption{
    The above figures show the average time to converge (Fig.~\ref{fig:10k_converge_sim_time}) and the average iteration time (Fig.~\ref{fig:10k_avg_iter_time}) in the data centre scale experiments.
    We can see that as we increase the number of nodes, iterations take longer but overall we require less iterations to converge.
    This can be attributed to the low diameter of the graph, which allows more paths of communication between the nodes as their overall count in each network topology increases.
    }
    \label{fig:10k_execution_time}
\end{figure}
As we can see from Fig.~\ref{fig:10k_converge_stats_delay_1} and Fig.~\ref{fig:10k_converge_stats_delay_5} the window size \emph{decreases} as the network size grows. In the presence of low delays (Fig.~\ref{fig:10k_converge_stats_delay_1}) the window is practically zero indicating that the ``min'' and ``max'' converge iteration coincides.
\begin{figure}
    \centering
    \begin{subfigure}{.49\linewidth}
    \centering
    \includegraphics[scale=0.29]{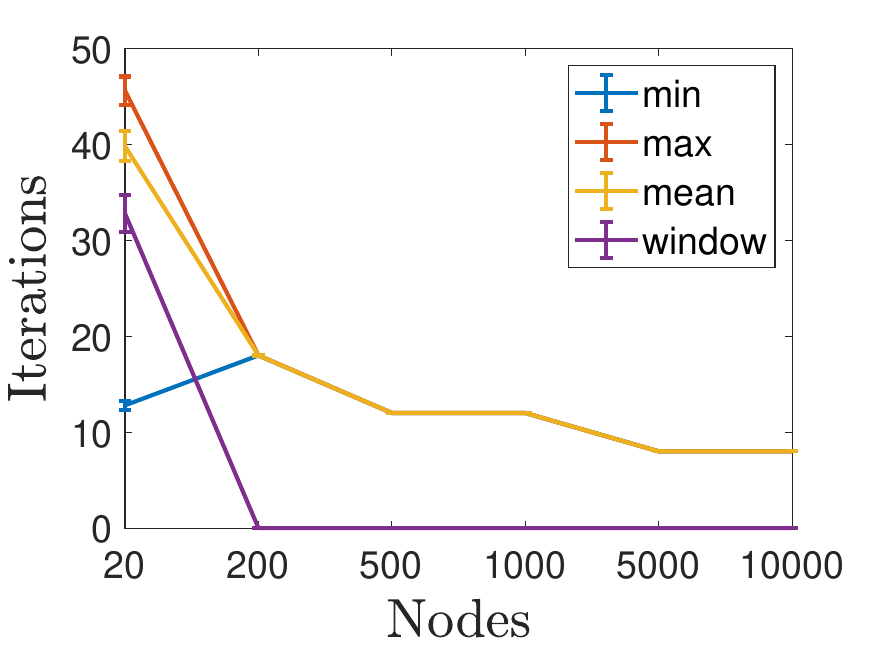}  
    \caption{Delay $\bar{\tau}=1$}
    \label{fig:10k_converge_stats_delay_1}
    \end{subfigure}
    \hfill
    \begin{subfigure}{.49\linewidth}
    \centering
        \includegraphics[scale=0.29]{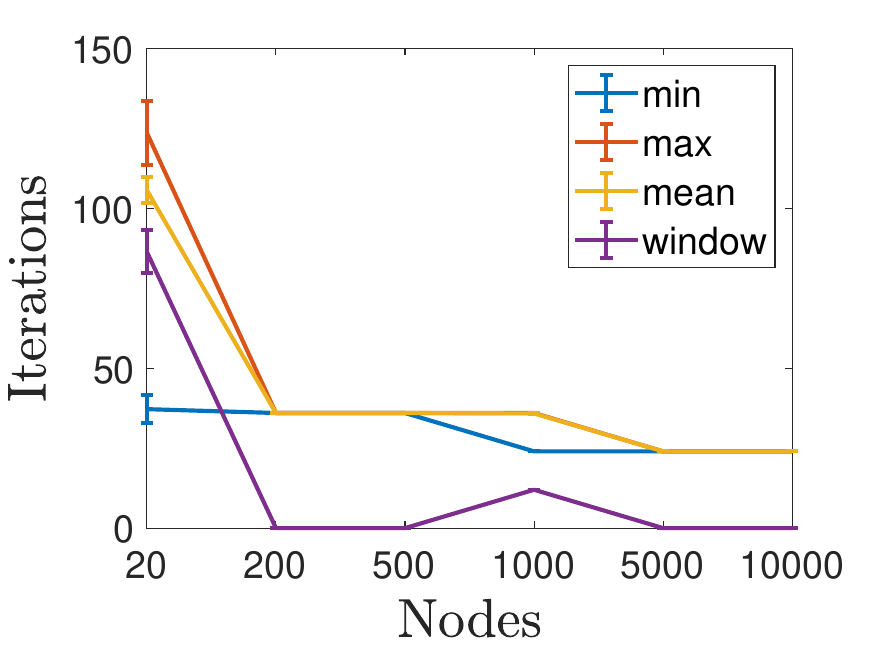}
    \caption{Delay $\bar{\tau}=5$}
    \label{fig:10k_converge_stats_delay_5}
    \end{subfigure}
    \caption{Converge statistics in the presence of both low (Fig.~\ref{fig:10k_converge_stats_delay_1}) and higher upper ((Fig.~\ref{fig:10k_converge_stats_delay_5})) bounds on delays ($\bar{\tau}$), equal to $1$ and $5$ respectively. 
    As the network size grows the window which the first (\emph{min}) and the last (\emph{max}) node converges becomes zero. 
    This indicates that as the network size grows we require fewer iterations to converge and all nodes will converge at the same iteration.
    Note, that as the delay scales delay results in an increase, on average, by a factor of $\approx 2$x to the number of iterations required to converge.
    However, it is worth pointing out that the window to converge remains still very low, and sometimes zero as network size increases.
    }
    \label{fig:10k_converge_stats_combined}
\end{figure}
\noindent Practically speaking, this indicates that the converge variability is low in large networks and is expected to converge in few iterations. 
This means that tasks can be scheduled in a timely fashion and with optimal placement for the given set of jobs. 
This is highly important for any modern data center scheduler aiming to schedule thousands of jobs at-a-time on thousands of nodes in a timely fashion.

\section{Discussion}\label{sec:discussion}
In this paper, we proposed a finite-time asynchronous algorithm for distributively computing a value which a network of nodes can use to make local control decisions. 
Contrary to prior work, our approach is able to operate asynchronously and, as a consequence, also able to handle delays by construction.
To our knowledge this is the first proposed algorithm able to provide finite-time guarantees in the combined delay tolerant and asynchronous setting.

The proposed scheme uses the industry standard CPU utilization model and is able to balance the workload allocation such that each node is allocated tasks proportional to its capabilities.
Concretely, this model defines that the utilization of each CPU core is measured in the bounded range of $[0, 100]$ and indicates the utilization percentage for each individual core within a specified machine~\cite{vmware_performance_2018}.
This effectively allows us to evenly distribute to load across all of the available network nodes loading to better overall cluster utilization.
Practically speaking, it is standard practice in data centers to share the load across the available nodes.
We emphasize that our algorithm algorithm is able to handle both regular workloads as well as bursty ones.
This is achieved because our optimization algorithm works in buckets; where each bucket is filled with incoming jobs.
At the time of scheduling each job within the bucket is attempted to be placed into a suitable node, while guaranteeing the balancing of the overall cluster load.
The time of scheduling is fixed to be at regular intervals or if the bucket is filled.
This behaviour is beneficial for a number of reasons: firstly, bursty workloads are able to be handled gracefully, and secondly, even if there are not enough jobs within the bucket they will still be scheduled in a timely manner.
Note, that our experiments are designed to reflect practical data center deployments which implies that the network graphs considered will be of low diameter and have good connectivity.

Interestingly, as per~\Cref{Algorithm_finitetimeRCwDelays} and a corollary of~\Cref{thm:finitercdelays} the convergence rate is only bounded by the network diameter and its maximum delay.
More importantly, our particular setting implies that packet loss is assumed to be minimal in such deployments but not \textit{delays}. The delays can be attributed to processing and communication delays.
Experiencing processing delays is common in data centers and in the presence of over-provisioned or straggler nodes. 
Communication delays are mainly because of re-transmissions due to packet losses. However, packet losses are not so common and, for this reason, we do not consider them in this work. 
Nevertheless, in case one wishes to consider packet losses as well, this can be achieved by establishing probabilistic guarantees for convergence based on the packet loss distribution.
However, that is beyond the context of this work and is left for future work.

We note that our scheme is asynchronous but in order to successfully operate it implies that the internal clocks of all nodes are paced similarly.
This requirement is necessitated as each node needs to be able to recognize when the appropriate iterations have elapsed.
As noted previously these checks happen every $(1+\hat{\tau})D$ iterations. 
Consistent pacing of each node's clock ensures that the check for convergence at each node will happen at roughly the same time~\cite{lamport_time_2019}.
However, this does not imply that we actually need to synchronize each of the nodes' time-zones nor their actual clocks but, rather, their internal clocks must have similar pacing~\cite{nystrom_uefi_2011}.
This is especially the case after the introduction of High Precision Event Timers in the low-level firmware of most commodity computers and servers alike~\cite{ridoux_case_2011, orosz_performance_2011}.
Notably, this is common practice and present in most modern computers as the clock pacing specification is defined within the Advanced Configuration and Power Interface (ACPI) specifications~\cite{uefi_forum_advanced_nodate,zimmer_beyond_2017}.
In fact, to address these issues in a standardized way hardware manufacturers created the Unified Extensible Firmware Interface (UEFI) forum~\cite{noauthor_unified_nodate}, which is responsible for defining the characteristics and functionality regarding the most basic, low-level functions that each modern computer or server should support. 

As aforementioned in Remark~\ref{remark:2}, a similar approach was proposed in~\cite{prakash_distributed_2020} in the context of average consensus with bounded time-varying delays. Apart from the differences in the application and the fact that we consider asynchronous operation of the nodes, the approach is similar. However, for proving convergence of their proposed algorithm they claim a form of monotonicity of the maximum and minimum values of the states. Specifically, it is claimed \cite[Lemma 3.2]{prakash_distributed_2020} that if the value held by an agent $v_i$ at the present instant of time is strictly lesser (greater) than the maximum (minimum) over the current and delayed values over a horizon $\bar{\tau}$ of all the nodal states, then, the value of agent $v_i$ continues to be strictly lesser (greater) than this maximum (minimum) for all future instants.
Notably, we found several examples of networks for which that statement is not valid.
Practical examples of networks that exhibit such violations are presented in Figures~\ref{fig:violation_20_nodes_diameter_5} and~\ref{fig:violation_50_nodes_diameter_4}.
\begin{figure}[h]
    \centering
    \includegraphics[width=\linewidth]{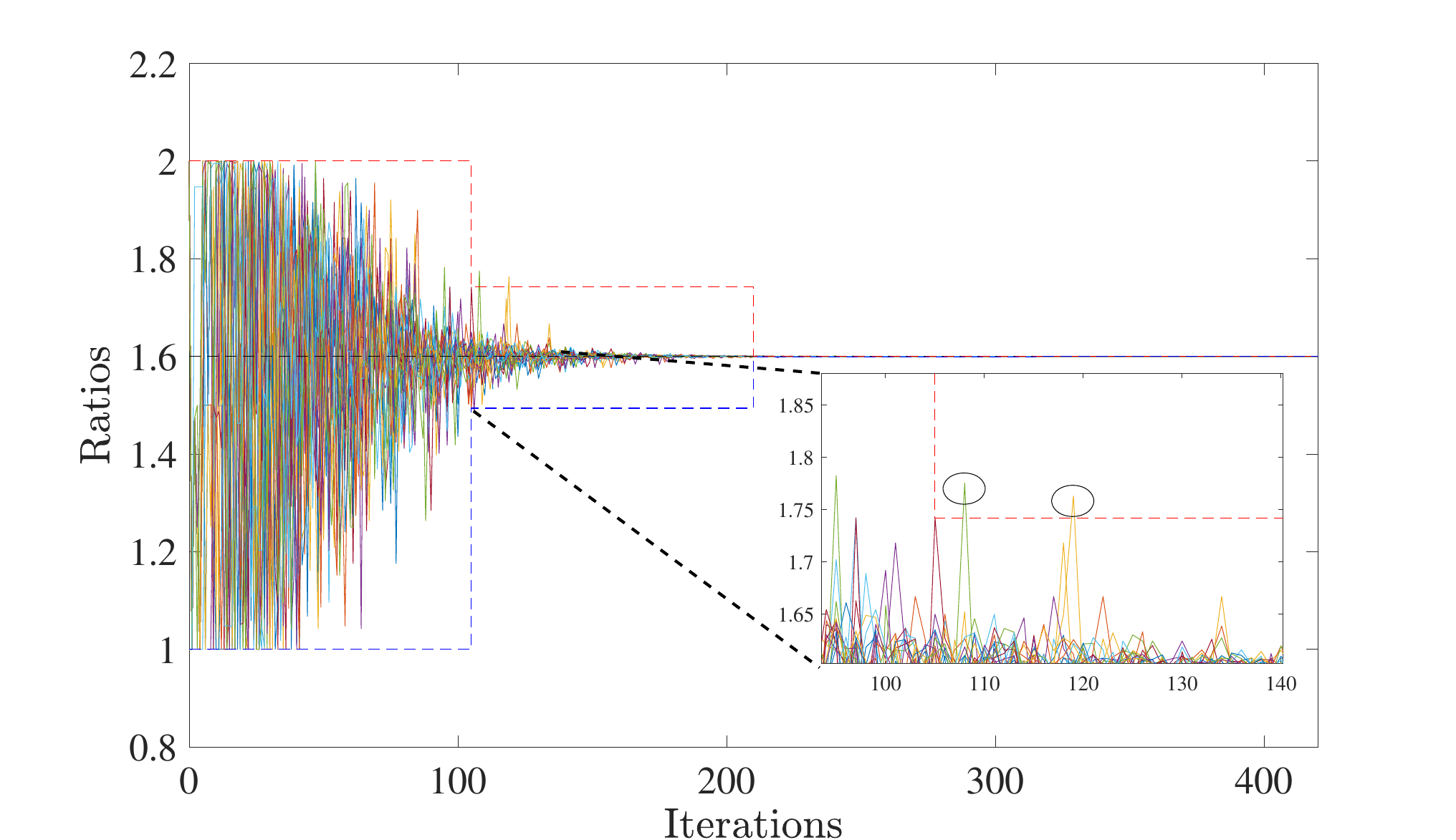}
    \caption{
        Violation in a network consisting of $20$ nodes with a diameter equal to $D=5$ when using a delay of $\tau=20$.
        {Indicatively,} circles indicate violations of {the claim in~\cite[Lemma 3.2]{prakash_distributed_2020}.}
    }
    \label{fig:violation_20_nodes_diameter_5}
\end{figure}
\noindent Concretely, in~\cref{fig:violation_20_nodes_diameter_5} we present a violation that happens in a network comprising of $20$ nodes with a diameter $D=5$ and a delay $\tau=20$.
Interestingly, as we can observe in~\cref{fig:violation_50_nodes_diameter_4} this violation is also observed when dealing with larger networks.
In this particular example presented below the issue is manifested in a network of $50$ nodes with a diameter of $D=4$ and a delay of $\tau=20$.

\begin{figure}[h]
    \centering
    \includegraphics[width=\linewidth]{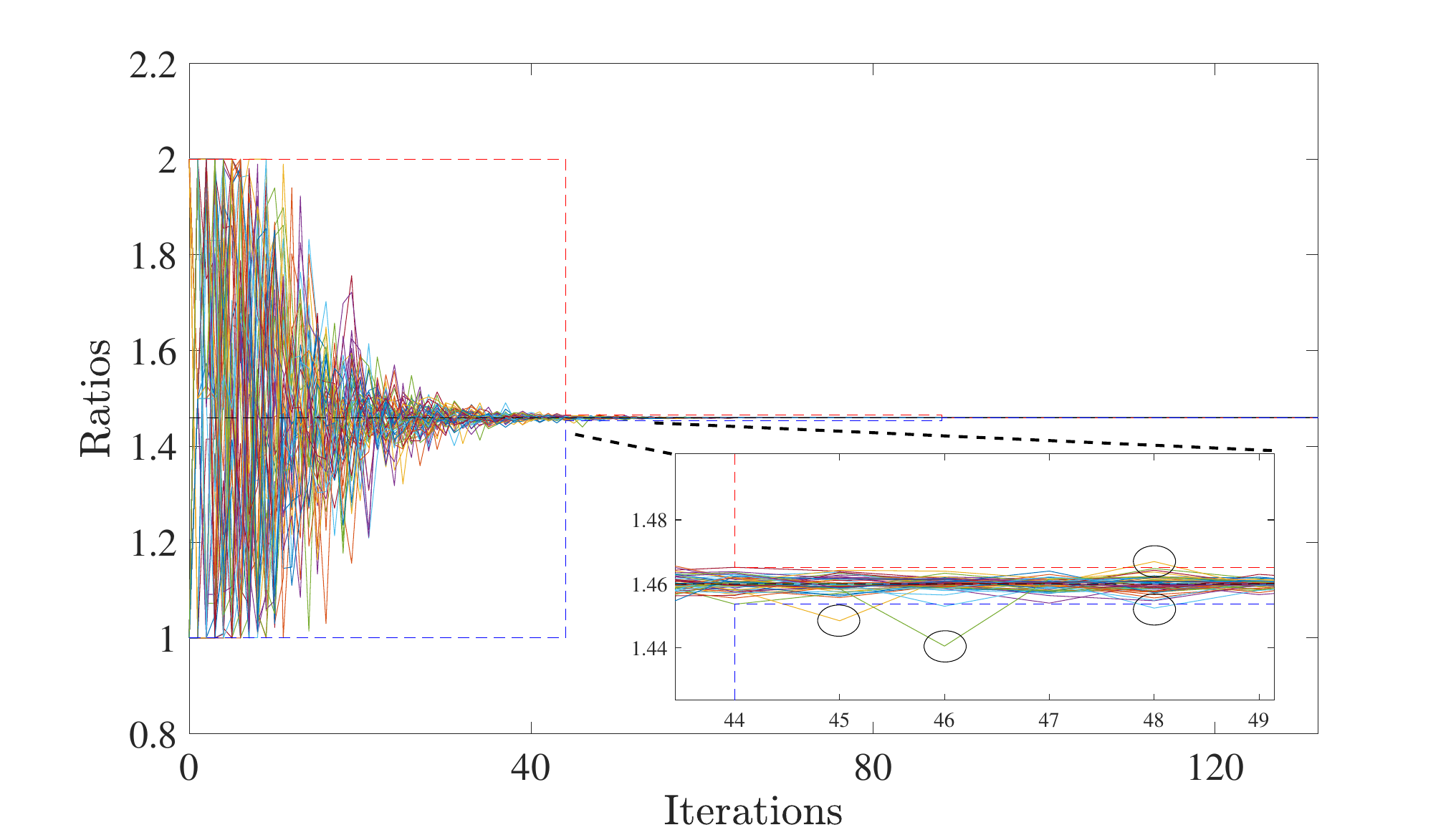}
    \caption{
        Another example of a violation using a larger network consisting of $50$ nodes with a diameter equal to $D=4$ when using a delay of $\tau=10$.
        As in the previous figure, circles indicate violations of the the claim in~\cite[Lemma 3.2]{prakash_distributed_2020}.
    }
    \label{fig:violation_50_nodes_diameter_4}
\end{figure}

Throughout our experiments we observed this behaviour to be more frequent with medium sized networks that had delays greater than $\tau=5$.
On the other hand, the diameter seems to be not a major contributing factor; at least for the values considered in our experiments (e.g., $D$ between $1$ and $10$).

Our solution is able to gracefully handle this situation and still converge into the optimal solution.
The effectiveness of our asynchronous finite-time algorithm was demonstrated on \texttt{CPU} resource allocation in data centers, which can result in better overall system utilization. %
However, one important aspect of such approaches, including our own, is the way they compare against more complex optimization problems. 
In particular against ones that do not have a closed form solution and require complex solvers to be approximated such as ADMM~\cite{jiang_asynchronous_2021}.
As formulated, our problem is able to tackle placement of jobs using the most commonly used CPU utilization model in practical deployments.  %
Furthermore, due to its problem formulation the problem admits a closed-form solution. 
This enables our method to reach the optimization objective significantly faster when compared to more sophisticated solvers such as ADMM; 
especially as the network sizes scale~\cite{chang_asynchronous_2016-1}.
Other approaches have been proposed as well for the same problem formulation~\cite{2021:Rikos}, but the termination of the optimization cannot be synchronized and re-initiating the optimization with the new requests is not possible.
More importantly, we note that our proposed method could also be exploited across multiple domains where asynchronous distributed coordination is desirable (e.g., distributed frequency regulation in microgrids, decentralized computation networks, and voltage control in distribution systems).

\section{Conclusions and Future Directions}\label{sec:conclusions}

\subsection{Conclusions}

In this paper, we proposed a finite-time asynchronous algorithm for distributively computing a value which a network of nodes can use to make local control decisions. 
Contrary to previously-proposed algorithms, our approach works also asynchronously. 
We evaluated our proposed solution using networks of varying delays and diameters which reflected practical data center installations as per common deployment guidelines.
The effectiveness of our asynchronous finite-time algorithm was evaluated against the \texttt{CPU} resource allocation in data centers. 
In turn, more efficient allocation of resources can lead to better overall system responsiveness and utilization. %

\subsection{Future Directions}

Our work can be easily extended to more general convex optimization problems, using gradient-consensus methods, as in~\cite{khatana_gradient-consensus_2020}, but our solution will allow for asynchronous operation and will be able to tolerate delays.

Part of ongoing research focuses on considering deadline constraints and cases for which the workloads exceed the available resources and exploit the heterogeneity of resource units available (e.g. CPU, GPU, and/or accelerators).
In such instances a more sophisticated rejection policy can take place based on inferred resource demands, task priorities, or introduce partial scheduling plans based on either priorities or further, more complex, constraints.

\bibliographystyle{IEEEtran}
\bibliography{main}

\end{document}